\newcommand{\shrinkpaper}{}
\let\theoremstyle\@undefined
\newtheoremstyle{dcsctudelft}
  {8pt}
  {}
  {\itshape}
  {}
  {\bfseries}
  {.}
  {3pt}
  {}
  \theoremstyle{dcsctudelft}
\begin{document}

\title{Synchronization of a class of cyclic discrete-event systems describing legged locomotion}

        
\author{G.A.D.\ Lopes, B. Kersbergen, B.\ De\ Schutter, T.J.J.\ van\ den\ Boom,  and R.\ Babu\v{s}ka
        
\thanks{Corresponding author G.A.D.~Lopes. Tel. +31-15-2785489. All authors are with the Delft Center for Systems and Control, Delft University of Technology, The Netherlands,
e-mail: \{g.a.delgadolopes, b.kersbergen, a.j.j.vandenboom, b.deschutter, r.babuska\}@tudelft.nl}
\thanks{Manuscript received xxxxxx, 20xx; revised xxxxx, 20xx.}}

\markboth{}
{Lopes \MakeLowercase{\textit{et al.}}: }

\maketitle

%
%

\begin{abstract}
It has been shown that max-plus linear systems are well suited for applications in synchronization and scheduling, such as the generation of train timetables, manufacturing, or traffic. In this paper we show that the same is true for multi-legged locomotion. In this framework, the max-plus eigenvalue of the system matrix represents the total cycle time, whereas the max-plus eigenvector dictates the steady-state behavior. Uniqueness of the eigenstructure also indicates uniqueness of the resulting behavior. For the particular case of legged locomotion, the movement of each leg is abstracted to two-state circuits: swing and stance (leg in flight and on the ground, respectively). The generation of a gait (a manner of walking) for a multiple legged robot is then achieved by synchronizing the multiple discrete-event cycles via the max-plus framework. By construction, different gaits and gait parameters can be safely interleaved by using different system matrices. In this paper we address both the transient and steady-state behavior for a class of gaits by presenting closed-form expressions for the max-plus eigenvalue and max-plus eigenvector of the system matrix and the coupling time. The significance of this result is in showing guaranteed robustness to perturbations and gait switching, and also a systematic methodology for synthesizing controllers that allow for legged robots to change rhythms fast.
\end{abstract}

\begin{IEEEkeywords}
Discrete-event systems, max-plus algebra, coupling time, legged locomotion, gait generation, robotics
\end{IEEEkeywords}

\IEEEpeerreviewmaketitle

%
%
%
%
%
%
%
%

%
%

\section{Introduction}

Synchronization of cyclic processes is important in many fields, including manufacturing \cite{Zhou92}, transportation \cite{heidergott01}, genomics \cite{Shedden02}, and neuroscience \cite{yamaguchi03,holmes06}, etc (see references within \cite{Dorfler12}). In this paper we focus on a class of multiple concurrent two-state cyclic systems with a direct application to legged locomotion. Our motivation is the requirement of legged mobile robots to cope with unstructured terrains in a flexible way by smoothly and effectively switching between different gaits.

Legged systems are traditionally modeled using cross-products of circles in the phase space of the set of continuous time gaits. Holmes et al.~\cite{holmes06} give an extensive review of dynamic legged locomotion. The \CPG~\cite{Ijspeert08} approach to design motion controllers lies in assembling sets of error functions to be minimized that cross-relate the phases of multiple legs, resulting in attractive limit cycles for the desired gait. Switching gaits online is typically not addressed since in the \CPG~framework switching must be modeled as a hybrid system. Additionally, implementing ``hard constraints'' on the configuration space \cite{Haynes09} can be quite complex mainly due to the combinatorial nature of the gait space, and often it comes at the cost of dramatically increasing the complexity of the controller.

As an alternative to the common continuous time modeling approach, we introduce an abstraction to represent the combinatorial nature of the gait space for multi-legged robots into ordered sets of leg index numbers. This abstraction combined with max-plus linear equations allows for systematic synthesis and implementation of motion controllers for multi-legged robots where gait switching is natural and the translation to continuous-time motion controllers is straightforward \cite{lopes09}. The methodology presented is particularly relevant for robots with four, six, or higher numbers of legs where the combinatorial nature of the gait space starts to play an important role. For a large number of legs it is not obvious in which order each leg should be in swing or in stance. Most legged animals, in particular large mammals, are known to walk and run with various gaits on a daily basis, depending on the terrain or how fast they need to move. The discrete-event framework presented in this paper enables the same behavior for multi-legged robots. Mathematical properties are derived for this framework, giving extra insight into the resulting robot motion.

The main mathematical representation employed in this paper are switching max-plus linear equations. By defining the state variables to represent the time at which events occur, systems of linear equations in the \emph{max-plus algebra} \cite{cuninghame79,baccelli92,heidergott06} can model a class of timed discrete-event systems. Specifically, max-plus liner systems are equivalent to (timed) Petri nets \cite{Peterson81} where all places have a single incoming and a single outgoing arc. Max-plus linear systems inherit a large set of analysis and control synthesis tools thanks to many parallels between the max-plus-linear systems theory and the traditional linear systems theory.
Discrete-event systems that enforce synchronization can be modeled in this framework. Max-plus algebra has been successfully applied to railroads \cite{braker91,heidergott01}, queuing systems \cite{heidergott00}, resource allocation \cite{gaubert98}, and recently to image processing \cite{bede09} and legged locomotion \cite{lopes09,lopes10}.

The contributions of this paper are the following: we present a class of max-plus linear systems that realize the synchronization of the legs of a robot. 
Next, we derive closed-form expressions for the max-plus eigenvalue and eigenvector of the system matrix, and show that the max-plus eigen-parameters are max-plus unique, implying a unique steady-state behavior. 
 This result is then used to compute the coupling time, which characterizes the transient behavior. 
The importance of having closed-form expressions and uniqueness of the max-plus eigen-structure is that, not only can one compute these parameters very fast without recurring to simulations or numerical algorithms (e.g. Karp's algorithm \cite{baccelli92}), but one has also guarantees of uniqueness: the motion of the robot will always converge in a finite number of steps to the same prescribed behavior, regardless of gait changes or disturbances. This reassurance is fundamental when designing gait controllers for robotics. Additionally we present a low least number of steps needed to reach steady-state motion after changing gaits. This paper is focused on the general mathematical properties of a class of discrete-event systems that describe legged locomotion, and not on the actual implementation of gait controllers for robots, as presented previously in \cite{lopes09}.

In \refsec{mll} we revisit the fundamentals of legged locomotion with special emphasis on gait generation and show that max-plus algebra can be used in the modeling of the synchronization of multiple legs. In \refsec{maxplus} we briefly review relevant concepts from the theory of max-plus algebra 
and in \refsec{combinatorial} we present a class of parameterizations for the gait space. Given such class, we derive a number of properties, such as the max-plus eigen-structure of the system matrices (\refsec{eigen}), their graph representation (\refsec{graph}), and the coupling time (\refsec{coupling}).
\reffig{layout} illustrates the structure of the contributions presented in this paper. 

\figura{layout}{8cm}{Structure of the contributions of this paper. We analyze both the steady-state and the transient behavior of a class of cyclic discrete-event systems that are well suited to model legged locomotion.}{}
%
%

\section{Modeling legged locomotion}\labelsec{mll}

In literature \cite{raibert86,raibert89,full99,holmes06,grillner11}, the study of legged locomotion is approached from two main directions: the signal generation side, where emphasis is placed on the classes of signals that result in periodic locomotion behavior independent of the physical platform; and the mechanics side, where the (hybrid) Newtonian mechanics models are analyzed independently of the driving control signal. We focus on the first approach, by restating the traditional view of periodic gaits for legged systems being defined in the $n$-torus: Cartesian products of circles each representing an abstract phase that parameterizes the position of each leg in the Euclidean space. 
Such an abstraction serves as a platform for the models of ``networks of phase oscillators'' and \CPGs, introduced in the earlier works of Grillner \cite{grillner85} and Cohen et al. \cite{cohen88}. These are now accepted by both biology and robotics communities as standard modeling tools \cite{holmes06}. 
\figura{abstractions}{8.5cm}{Modeling of legged locomotion. a) The configuration space of two oscillators is a torus. Synchronization is achieved by constructing attractive limit cycles, represented by the curve on the torus. b) Discrete-event representations of multiple circuits can be modeled as concurrent cyclic Petri nets. Synchronization is achieved by adding extra places, represented by $s_{2}$ and $s_{2}$.}{}

In this paper we abstract beyond the notion of a continuous phase to consider concurrent circuits of discrete events. Taking a Petri net modeling approach, during ground locomotion the places represent leg stance (when the foot is touching the ground and supporting the body) and leg swing (when the foot and all parts of the leg are in the air). The transitions represent leg touchdown and lift off. This labeling is most convenient for ground locomotion, but one should be aware that the framework presented in this paper is valid for other types of locomotion where the phases of the various limbs need to be synchronized, such as swimming or flight. \reffig{abstractions} illustrates the conceptual difference between our approach (\reffig{abstractions}.b) where each phase is represented by a circuit in the discrete-event systems domain and the traditional continuous phase central pattern generator (\reffig{abstractions}.a). Here, phase synchronization is enforced on the torus by implementing controllers that achieve stable limit cycles \cite{klavins02}. In this paper we translate timed event graphs\footnote{We restrict ourselves to a class of timed Petri nets called \emph{timed event graphs} such that a one-to-one translation to max-plus linear systems is possible (see \cite{heidergott06}, chapter 7). In timed event graphs each place can have one single incoming arc and one single outgoing arc.} into the equivalent representation as max-plus linear systems, and achieve synchronization by designing the system matrices appropriately.
\figura{walking-pattern-compressed}{9cm}{Illustration of a walking pattern, photos by Muybridge \cite{Muybridge01}. The solid bars, following Hildebrand's diagram notation \cite{hildebrand65}, indicate that the leg is in stance (foot touching the ground) for $\Tg$ time units, and white space that the leg is in swing (foot in flight) for $\Tf$ time units. The time length when both feet are touching the ground is called the double-stance time $\Td$. We use the notation $t_1(k-1)$ to represent the time instant when leg 1 (left leg) touches the ground, and $l_1(k)$ when it initiates a swing. The parameter $k$ is the ``step counter''.}{}
\!\!Consider the leg synchronization of a biped robot. In this case, only two legs need to be synchronized during locomotion. For a typical walking motion (no aerial phase) the left leg should only lift off the ground after the right leg has touched down, to make sure the robot does not fall due to lack of support. This simple synchronization requirement can be captured by introducing state variables for the transition events defined as follows: let $\tL{i}(k)$ be the
time instant leg $i$ lifts off the ground and $\tT{i}(k)$ be the
time instant it touches the ground, both for $k$-th iteration, where $k$ is considered to be a global event or ``step'' counter.  
Enforcing that the time instant when the leg touches the ground must equal the time
instant it lifted off the ground for the last time plus the time it
is in swing (denoted $\Tf$) is realized by:
\NEn{\tT{i}(k)=\tL{i}(k)+\Tf\labeleq{touch}.}
A similar relation can be derived for the lift off time:
\NEn{\tL{i}(k)=\tT{i}(k-1)+\Tg,\labeleq{lift}}
where $\Tg$ is the stance time and $\tT{i}(k-1)$ refers to the previous
iteration such that equations \refeq{touch} and \refeq{lift} can be
used iteratively. For this system we have that $\Tf>0$ and also $\Tg>0$.
Synchronization of the cycles of two legs can be achieved by introducing a double stance time parameter, denoted $\Td$, representing that after each leg touchdown both legs must stay in stance for at least $\Td$ time units (see \reffig{walking-pattern-compressed}). This is captured by the following equations:
\NEn{\tL{1}(k) \ele \max\left(\tT{1}(k-1)+\Tg,\tT{2}(k-1)+\Td\right)\labeleq{liftsync}\\
\tL{2}(k) \ele \max\left(\tT{2}(k-1)+\Tg,\tT{1}(k)+\Td\right)\labeleq{liftsyncb}}
Equation \refeq{liftsync} enforces simultaneously that
\mbox{leg $1$} stays at least $\Tg$ time units in stance and will only
lift off at least $\Td$ time units after leg $2$ has touched down. When
both conditions are satisfied, lift off takes place. Equation \refeq{liftsyncb} is analogous for leg $2$.
Note that the parameters $\Tf, \Tg$, and $\Td$ are in fact the \emph{minimal} swing, stance, and double stance times, respectively, as opposed to the exact times. Note additionally that these equations are non-linear but they rely only on the ``max'' and ``plus'' operations. This motivates the use of the theory of the max-plus algebra to find parsimonious discrete-event models for legged locomotion. Although the previous set of equations were designed for a walking behavior with no aerial phases, in practice they can also be used for running by choosing the double stance time to be negative, i.e. by enforcing that one leg lifts off $\Td$ time units before the other leg touches the ground. However, the results presented in this paper focus on the case when $\Td\geq 0$ such that stability (in the sense of ensuring a desired minimum number of legs on the ground simultaneously) is guaranteed.

%

\section{Max-plus algebra}
\labelsec{maxplus}

In the early sixties the fact that certain classes of discrete-event
systems can be described by models using the operations $\max$ and $+$
has been discovered independently by a number of researchers, among
whom Cuninghame-Green \cite{Cun:60,Cuninghame62} and Giffler
\cite{Giffler60,Gif:63,Gif:68}.  These discrete-event systems are
called max-plus-linear systems since the model that describes their
behavior becomes ``linear'' when formulated it in the max-plus
algebra \cite{baccelli92,cuninghame79,heidergott06}, which has
maximization and addition as its basic operations.  More specifically,
discrete-event systems in which only synchronization and no concurrency
or choice occur can be modeled using the operations maximization
(corresponding to synchronization: a new operation starts as soon as
all preceding operations have been finished) and addition
(corresponding to the duration of activities: the finishing time of an
operation equals the starting time plus the duration).  Some examples
of max-plus linear discrete-event systems are production systems,
railroad networks, urban traffic networks, queuing systems, and array
processors \cite{baccelli92,cuninghame79,heidergott06}.

An account of the pioneering work of Cuninghame-Green on
max-plus system theory has been given in
\cite{cuninghame79}.  Related work has been done by Gondran and Minoux
\cite{GonMin:76,GonMin:84,GonMin:87}.  In the eighties the topic attracted new interest due to the research of Cohen, Dubois,
Moller, Quadrat, Viot \cite{CohDub:83,cohen85,CohMol:89}, Olsder
\cite{Ols:86,OlsRoo:88,OlsRes:90}, and Gaubert
\cite{Gau:90,Gau:92,Gau:93}, which resulted in the publication of
\cite{baccelli92}.  Since then, several other researchers have entered
the field.  For an historical overview we refer the interested reader
to \cite{gaubert97,heidergott06,DeSvan:08-007}.
\Gshrink
In this section we give an introduction to the max-plus algebra.  This
section is based on \cite{baccelli92,cuninghame79}, where a complete overview
of the max-plus algebra can be found.
\Gshrink
The basic operations of the max-plus algebra
are maximization and addition, which will be
represented by $\oplus$ and $\otimes$ respectively
\ZNE{
   x \oplus y = \max(x,y) \quad \text{ and } \quad
   x \otimes y = x + y, 
}
for $x,y \in \Rmax \bydefinition \Reals \cup \{ -\infty \}$. 
The zero
element for $\oplus$ in $\Rmax$ is $\ep \bydefinition -\infty$ and the
unit element for $\otimes$ is $\ze \bydefinition 0$.  
\Gshrink
The structure $(\Rmax,\oplus,\otimes)$ is called
the max-plus algebra \cite{baccelli92,cuninghame79}. The operations
$\oplus$ and $\otimes$ are called the max-plus-algebraic
addition and max-plus-algebraic multiplication
respectively since many properties and concepts
from linear algebra can be translated to
the max-plus algebra by replacing
$+$ by $\oplus$ and $\times$ by $\otimes$.
\Gshrink
The max-plus algebra is a typical example of a class of algebraic
structures called commutative dioids. In a dioid the additive
operation $\oplus$ is associative, commutative, and idempotent, and it
has a zero element; the multiplicative operation $\otimes$ is
commutative and associative and it has an identity element; the
additive zero element is absorbing for $\otimes$; and $\otimes$ is
left and right distributive w.r.t.~$\oplus$.
\Gshrink
Let $r \in \Reals$. The $r$th max-plus power of $x \in \Reals$ is
denoted by $\Mpower{x}{r}$ and corresponds to $r x$ in conventional
algebra. If $x \in \Reals$ then $\Mpower{x}{0} = 0$ and the inverse
element of $x$ w.r.t.~$\otimes$ is $\Mpower{x}{-1} = -x$.  There is no
inverse element for $\ep$ since $\ep$ is absorbing for $\otimes$.  If
$r > 0$ then $\Mpower{\ep}{r} = \ep$.  If $r < 0$ then
$\Mpower{\ep}{r}$ is not defined.  In this paper we have
$\Mpower{\ep}{0} = 0$ by definition.
The rules for the order of evaluation of the max-plus operators are
similar to those of conventional algebra. So max-plus power
has the highest priority, and max-plus multiplication
$\otimes$ has a higher priority than max-plus addition
$\oplus$.
\Gshrink
Throughout this paper the $i,j$ element of a matrix $\A$ is denoted by
$[\A]_{ij}$.  The matrix $\Mz_{m \times n}$ is the $m$ by $n$
max-plus zero matrix: $[ \Mz_{m \times n} ]_{i j} = \ep$
for all $i,j$.  The matrix $\Mid_n$ is the $n$ by $n$
max-plus identity matrix: $[\Mid_n]_{i i} = \ze$ for all $i$
and $[\Mid_n]_{i j} = \ep$ for all $i,j$ with $i \neq j$. We also
define the $m$ by $n$ max-plus ``one'' matrix $\Mone_{m \times n}$ such
that $[\Mone]_{i j}=\ze=0$ for all $i,j$. If the dimensions of $\Mz$,
$\Mid$, $\Mone$ are omitted in this paper, they should be clear from
the context.
\Gshrink
The basic max-plus-algebraic operations are extended to matrices as
follows.  If $A, B \in \Rmax^{m \times n}$, $C \in \Rmax^{n \times p}$
then
\begin{align}
   [ A \oplus B ]_{i j} &= [A]_{i j} \oplus [B]_{i j} = \max([A]_{i j}
   ,
   [B]_{i j} ) \\
   [ A \otimes C ]_{i j} &= \bigoplus_{p=1}^{n} [A]_{i p} \otimes
   [C]_{p j} = \max_{p=1,\dots,n} ( [A]_{i p} + [C]_{p j} )
\end{align}
for all $i,j$.  Note the analogy with the definitions of matrix sum
and product in conventional linear algebra.  The max-plus product of
the scalar $\alpha \in \Rmax$ and the matrix $A \in \Rmax^{m \times
   n}$ is defined by $[ \alpha \otimes A]_{i j} = \alpha \otimes
[A]_{i j}$ for all $i,j$.  The max-plus matrix power of $A \in
\Rmax^{n \times n}$ is defined as follows: $\Mpower{A}{0} = \Mid_n$
and $\Mpower{A}{p} = A \otimes \Mpower{A}{p-1}$ for $p\geq 1$.
\Gshrink
\begin{theorem}[see \cite{baccelli92}, Th 3.17]
Consider the following system of linear equations in the max-plus
algebra:
\NEn{
   x = A \otimes x \oplus b
  \label{eq_max_plus_linear}
}
with $A \in \Rmax^{n \times n}$ and $b, x \in \Rmax^{n \times 1}$.
Now let
\NEn{
   \As \bydefinition \bigoplus^{\infty}_{p=0}\Mpower{\A}{p} \enspace.
}
If $\As$ exists then 
\NEn{\x=\As\mtimes \bb\labeleq{explicitsolution}} 
solves the system of max-plus
linear equations \eqref{eq_max_plus_linear}
\end{theorem}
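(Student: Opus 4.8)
The plan is to verify directly that the candidate $\x = \As \mtimes \bb$ satisfies \eqref{eq_max_plus_linear} by substituting it into the right-hand side and simplifying with the dioid axioms recalled above. Substituting yields $\A \otimes \x \oplus \bb = \A \otimes (\As \mtimes \bb) \oplus \bb$, and by associativity of $\otimes$ this equals $(\A \otimes \As) \otimes \bb \oplus \bb$. Writing $\bb = \Mid \otimes \bb$ and invoking right-distributivity of $\otimes$ over $\oplus$ (a dioid property that lifts verbatim to matrices) turns this into $(\A \otimes \As \oplus \Mid) \otimes \bb$. Hence the entire proof reduces to the single matrix identity $\A \otimes \As \oplus \Mid = \As$.

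To establish that identity I would exploit the series definition $\As = \bigoplus_{p=0}^{\infty} \Mpower{\A}{p}$. Multiplying on the left by $\A$ and shifting the summation index gives $\A \otimes \As = \bigoplus_{p=0}^{\infty} \Mpower{\A}{p+1} = \bigoplus_{p=1}^{\infty} \Mpower{\A}{p}$. Adjoining the missing $p=0$ term, namely $\Mpower{\A}{0} = \Mid$, restores the full series: $\A \otimes \As \oplus \Mid = \bigoplus_{p=1}^{\infty} \Mpower{\A}{p} \oplus \Mpower{\A}{0} = \bigoplus_{p=0}^{\infty} \Mpower{\A}{p} = \As$. Combined with the reduction above this gives $\A \otimes \x \oplus \bb = \As \otimes \bb = \x$, which is precisely \eqref{eq_max_plus_linear}, so the candidate is indeed a solution.

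The main obstacle is the step $\A \otimes \As = \bigoplus_{p=0}^{\infty} \A \otimes \Mpower{\A}{p}$, that is, interchanging left-multiplication by $\A$ with the \emph{infinite} sum $\bigoplus_{p=0}^\infty$. For finite sums this is immediate from distributivity, but here one must argue that $\otimes$ commutes with the supremum that the infinite $\bigoplus$ represents. This holds because, entrywise, $\bigoplus$ is a supremum in $\Rmax$ and $\otimes$ (ordinary addition underneath) is order-continuous, so it preserves the supremum of the monotone nondecreasing partial sums $\bigoplus_{p=0}^{N} \Mpower{\A}{p}$. The hypothesis that $\As$ exists guarantees these partial sums converge, so the interchange is justified and no term is lost; beyond this continuity point the argument is purely algebraic manipulation of the dioid axioms.
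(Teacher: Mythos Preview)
Your proof is correct and is the standard direct verification: substitute $\x=\As\otimes\bb$ into the right-hand side, use associativity and distributivity to factor as $(\A\otimes\As\oplus\Mid)\otimes\bb$, and then invoke the index-shift identity $\A\otimes\As\oplus\Mid=\As$. Your remark about order-continuity to justify distributing $\A\otimes(\cdot)$ over the infinite $\bigoplus$ is the only delicate point, and you handle it appropriately.

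Note, however, that the paper does not supply its own proof of this theorem at all: it is stated as a quoted result with the attribution ``see \cite{baccelli92}, Th~3.17'' and no proof environment follows. So there is nothing in the paper to compare your argument against; your write-up is essentially the argument one finds in the cited reference.
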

\begin{definition}
   The matrix $A\in\Rmax^{n\times n}$ is called nilpotent if there
   exists a finite positive integer $p_0$ such that for all integers $p \geq p_0$
   we have $\Mpower{A}{p} = \Mz$.
\end{definition}
It is easy to verify that if $A\in\Rmax^{n\times n}$ is nilpotent then
$p_0 \leq n$.
\Gshrink
For $\A,\B\in\Rmax^{n\times m}$ we say that $A$ \emph{overcomes} $B$,
written as $A\geq B$ if $A\mplus B=A$ (i.e., $[A]_{i j}\geq [B]_{i j}$
for all $i,j$).
\Gshrink
A directed graph $\mathcal{G}$ is defined as an ordered pair
($\mathcal{V}$,$\mathcal{A}$), where $\mathcal{V}$ is a set of
vertices and $\mathcal{A}$ is a set of ordered pairs of vertices.  The
elements of $\mathcal{A}$ are called arcs.  A loop is an arc of the
form $(v,v)$.
\Gshrink
Let $\mathcal{G}=(\mathcal{V},\mathcal{A})$ be a directed graph with
$\mathcal{V}=\{v_1,v_2, \ldots, v_n\}$.  A path $p$ of length $l$ is a
sequence of vertices $v_{i_1}$, $v_{i_2}$, \ldots, $v_{i_{l+1}}$ such
that ($v_{i_k}$, $v_{i_{k+1}}$) $\in \mathcal{A}$ for
$k=1,2,\ldots,l$.  We represent this path by $v_{i_1} \rightarrow
v_{i_2} \rightarrow \ldots \rightarrow v_{i_{l+1}}$ and we denote the
length of the path by $|p|_{\mathrm{l}}=l$. Vertex $v_{i_1}$ is the
initial vertex of the path and $v_{i_{l+1}}$ is the final vertex of
the path. The set of all paths of length $l$ from vertex $v_{i_1}$ to
$v_{i_l}$ is denoted by $\path(v_{i_1},v_{i_l};l)$.
\Gshrink
When the initial and the final vertex of a path coincide, we have a
circuit. An elementary circuit is a circuit in which no vertex appears
more than once, except for the initial vertex, which appears exactly
twice.  A directed graph $\mathcal{G}=(\mathcal{V},\mathcal{A})$ is
called \emph{strongly connected} if for any two different vertices
$v_i$, $v_j \in \mathcal{V}$ there exists a path from $v_i$ to $v_j$.
\Gshrink
If we have a directed graph $\mathcal{G}=(\mathcal{V},\mathcal{A})$
with $\mathcal{V}= \set{1,2}{n}$ and if we associate a real number
$[A]_{i j}$ with each arc $(j,i) \in \mathcal{A}$, then we say that
$\mathcal{G}$ is a weighted directed graph. We call $[A]_{i j}$ the
weight of the arc $(j,i)$.  Note that the first subscript of $[A]_{i j}$
corresponds to the final (and not the initial) vertex of the arc
$(j,i)$. 
\begin{definition}[Precedence graph]
   Consider $A \in \Rmax^{n \times n}$.  The precedence graph of $A$,
   denoted by $\mathcal{G}(A)$, is a weighted directed graph with
   vertices $1$, $2$, \ldots, $n$ and an arc $(j,i)$ with weight $[A]_{i
      j}$ for each $[A]_{i j} \neq \ep 
      $.
\end{definition}
Let $A\in \Rmax^{n \times n}$ and consider $\mathcal{G}(A)$.
The weight $|p|_{\mathrm{w}}$
of a path $p: i_1 \rightarrow i_2 \rightarrow \ldots \rightarrow
i_{l+1}$ is defined as the sum of the weights of the arcs that compose
the path: $|p|_{\mathrm{w}} = [A]_{i_2 i_1} + [A]_{i_3 i_2} + \ldots +
[A]_{i_{l+1} i_{l}} = \bigotimes_{k=1}^{l}[A]_{i_{k+1}i_{k}}$.  The
average weight of a circuit is defined as the weight of the circuit
divided by the length of the circuit: $|p|_{\mathrm{w}}/|p|_{\mathrm{l}}$.
\begin{definition}[Irreducibility]
   A matrix $A \in \Rmax^{n \times n}$ is called irreducible
   if its precedence graph is strongly connected.
\end{definition}
\begin{definition}[Max-plus eigenvalue and
      eigenvector]
   Let $A \in \Rmax^{n \times n}$.  If there exist a number $\lambda
   \in \Rmax$ and a vector $v \in \Rmax^{n}$ with $v \neq \Mz_{n
      \times 1}$ such that $A  
  \otimes  
      v = \lambda 
      \otimes   
      v$, then we say that
   $\lambda$ is a max-plus eigenvalue of $A$ and that $v$ is
   a corresponding max-plus eigenvector of $A$.
\end{definition}
It can be shown that every square matrix with entries in $\Rmax$ has
at least one max-plus eigenvalue (see e.g.~\cite{baccelli92}).
However, in contrast to linear algebra, the number of max-plus
eigenvalues of an $n$ by $n$ matrix is in general less than $n$.  If a
matrix is irreducible, it has only one max-plus eigenvalue (see
e.g.~\cite{cohen85}). Moreover, if $v$ is a max-plus eigenvector of
$A$, then $\alpha \otimes v$ with $\alpha \in \Reals$ is also a
max-plus eigenvector of $A$.
\Gshrink
The max-plus eigenvalue has the following graph-theoretic
interpretation.  Consider $A \in \Rmax^{n \times n}$.  If
$\lambda_\mathrm{max}$ is the maximal average weight over all
elementary circuits of $
\mathcal{G}(A)  
$, then $\lambda_\mathrm{max}$ is a
max-plus eigenvalue of $A$.  For formulas and algorithms to determine
max-plus eigenvalues and eigenvectors the interested reader is
referred to~\cite{baccelli92,BraOls:93,cohen85,Kar:78} and the
references cited therein. Every circuit of $\mathcal{G}(A)$ with an
average weight that is equal to $\lambda_\mathrm{max}$ is called a
critical circuit.  The \emph{critical graph} $\graphc(A)$ of the
matrix $A$ is the set of all critical circuits. Let $\Integers$ be the set of positive, non-zero integers.
\begin{theorem}
  \labelth{coupling}
   Let $A$ be an irreducible matrix. Then there exists $c\in\Integers$ (the cyclicity of $A$), $\lambda\in\Reals$ (the unique max-plus eigenvalue of $A$), and $\couplingz\in\Integers$ (the coupling time of $A$) such that
   \begin{gather}
      \forall p\! \geq\! k_0\,:\:
      \Mpower{A}{p+c} = \Mpower{\lambda}{c} 
      \!\otimes\!  
      \Mpower{A}{p}
   \end{gather}
\end{theorem}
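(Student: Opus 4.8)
The plan is to recast the identity as a statement of \emph{ultimate periodicity} of the power sequence $\Mpower{A}{p}$ and to resolve it with the graph machinery already introduced. First I would normalize the eigenvalue away. Since $A$ is irreducible it possesses a single eigenvalue $\lambda\in\Reals$, equal to the maximal average circuit weight of $\mathcal{G}(A)$; set $\bar{A}\bydefinition\Mpower{\lambda}{-1}\otimes A$, the matrix obtained by subtracting $\lambda$ from every finite entry. Because $\Mpower{(\Mpower{\lambda}{-1}\otimes A)}{p}=\Mpower{\lambda}{-p}\otimes\Mpower{A}{p}$, the target identity $\Mpower{A}{p+c}=\Mpower{\lambda}{c}\otimes\Mpower{A}{p}$ is \emph{equivalent} to $\Mpower{\bar{A}}{p+c}=\Mpower{\bar{A}}{p}$. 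Hence it suffices to prove that the powers of $\bar{A}$ are eventually periodic and to identify the period with the integer $c$.

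Next I would pass to the path interpretation. By the definitions of path weight and of max-plus matrix powers, $[\Mpower{\bar{A}}{p}]_{ij}$ equals the maximal weight over all paths of length $p$ from $j$ to $i$ in $\mathcal{G}(\bar{A})$ (and equals $\ep$ if no such path exists). After normalization every circuit of $\mathcal{G}(\bar{A})$ has nonpositive weight while the critical circuits, those in $\graphc(\bar{A})$, have weight exactly $\ze$. Decomposing any path into an elementary subpath (of length $<n$) together with circuits, and using that each circuit weighs $\le\ze$, bounds the entries above by a finite constant; irreducibility together with the existence of a zero-weight critical circuit lets one build length-$p$ paths of weight bounded below for all large $p$. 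Thus, for $p$ large, every entry of $\Mpower{\bar{A}}{p}$ lives in a bounded subset of $\Reals$.

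The core of the argument is then to read the period off the critical graph. I would take $c$ to be the cyclicity of $A$, namely the least common multiple of the cyclicities of the maximal strongly connected subgraphs of $\graphc(\bar{A})$, where the cyclicity of a strongly connected graph is the greatest common divisor of its circuit lengths. The decisive ingredient is a purely combinatorial reachability lemma: a strongly connected graph of cyclicity $c$ admits a threshold beyond which, for vertices in the same cyclic class, a path of length $p$ exists if and only if a path of length $p+c$ exists. Applied on $\graphc(\bar{A})$, where all arcs weigh $\ze$, this makes the powers restricted to critical vertices exactly $c$-periodic for large $p$. It then remains to show that once $p$ exceeds some coupling time $k_0=\couplingz$, a maximizing path between any pair of vertices spends all but a bounded prefix and suffix inside $\graphc(\bar{A})$: every excursion through a non-critical circuit weighs $<\ze$ and so strictly lowers the weight, meaning an optimal length-$p$ path can afford only a bounded number of non-critical arcs regardless of $p$. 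Matching an optimal length-$p$ path with an optimal length-$(p+c)$ path that differs by one extra winding around a critical component yields $[\Mpower{\bar{A}}{p+c}]_{ij}=[\Mpower{\bar{A}}{p}]_{ij}$ for all $i,j$ whenever $p\ge k_0$, which is the claim.

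The hard part will be this last step: making rigorous that beyond $k_0$ optimal paths ``saturate'' on the critical graph, so that an optimal length-$(p+c)$ path arises from an optimal length-$p$ path by inserting a single critical detour of length $c$, and conversely. This couples the number-theoretic reachability lemma on cyclic graphs, itself a Frobenius-type statement about which residues modulo $c$ are realizable by circuit lengths, with the weight-optimality estimate that caps the total length of non-critical segments uniformly in $p$. Extracting from these two ingredients an \emph{explicit} threshold $k_0$, rather than a bare existence statement, is the delicate bookkeeping, and it is precisely this threshold that the paper subsequently determines in closed form for the gait matrices.
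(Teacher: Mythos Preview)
Your outline is correct and follows the classical route to the cyclicity theorem: normalize by the eigenvalue, interpret powers as maximal path weights, bound entries via decomposition into elementary paths plus circuits, and extract eventual periodicity from the cyclicity of the critical graph together with a saturation argument for optimal paths. This is precisely the argument developed in the references the paper cites.

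Note, however, that the paper does \emph{not} prove this theorem at all: its ``proof'' reads in full ``See e.g.~\cite{baccelli92,cohen85,Gau:94}.'' The result is quoted as standard background from the max-plus literature, and the paper's own contribution begins later, where it computes the coupling time $k_0=2$ for the specific gait matrices (Lemma~\ref{lem.couplingproof}). So there is nothing to compare your argument against in the paper itself; what you have sketched is essentially the proof one finds in Baccelli et al.\ or in Cohen--Dubois--Quadrat, and your identification of the ``hard part'' --- turning the qualitative saturation on $\graphc(\bar A)$ into an explicit threshold --- is exactly the piece those references work out in detail and that the paper then sharpens for its particular class of matrices.
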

\begin{proof}
   See e.g.~\cite{baccelli92,cohen85,Gau:94}.
\end{proof}

%
%

\section{Legged locomotion via max-plus modeling}\labelsec{combinatorial}
 
In \refsec{mll} equations \refeq{touch}, \refeq{liftsync}, and \refeq{liftsyncb}  describe the synchronization constraints between two legs. We can generalize these equations by defining the following vectors for an $n$-legged robot:
\NEn{\tT{}(k)&=&\left[\tT{1}(k)~\cdots~\tT{n}(k)\right]^{T}\\
\tL{}(k)&=&\left[\tL{1}(k)~\cdots~\tL{n}(k)\right]^{T}}
Equation \refeq{touch} is then written as:
\NEn{\tT{}(k)&=&\Tf\mtimes\tL{}(k)\labeleq{xxa}
}
If one assumes that the synchronization is always enforced on the lift off time of a leg, equations \refeq{liftsync} and \refeq{liftsyncb} are written jointly as:
\NEn{\tL{}(k)&=&\Tg\mtimes\tT{}(k-1)\mplus \MTD\mtimes \tT{}(k)\mplus \MTDa \mtimes \tT{}(k-1),\labeleq{xxb}
}
where the matrices $\MTD$ and $\MTDa$ encode the synchronization between lift off of a leg related to a touchdown of the current event (as in equation \refeq{liftsyncb}) and a touchdown of the previous event (as in equation \refeq{liftsync}), respectively. The rationale behind this particular model is to prevent that a legged platform has too many legs in swing while walking\footnote{As mentioned previously in this paper we don't consider running, although it can still be achieved using the same class of models.} risking falling down. Synchronization constraints are always imposed on legs that are in stance and are about to enter swing: some legs should only swing if others are in stance (equation \refeq{xxb}). Once in swing, legs are never constrained to go into stance (equation \refeq{xxa}).

 Equations \refeq{xxa} and \refeq{xxb} are written in state-space form as:
\shrinkornot
{
\NEn{
\matris{c}{\tT{}(k)\\\tL{}(k)}\ele
\matris{c|c}{\Mz &\Tf \mtimes \Mid\\\hline \MTD& \Mz  }\mtimes
\matris{c}{\tT{}(k)\\\tL{}(k)}\nonumber\\&&\mplus
\matris{c|c}{\Mid &\Mz\\\hline \Tg \mtimes \Mid\mplus\MTDa& \Mid  }\mtimes
\matris{c}{\tT{}(k-1)\\\tL{}(k-1)}\labeleq{sssync}}
}
{
\NEn{
\matris{c}{\tT{}(k)\\\tL{}(k)}\ele
\matris{c|c}{\Mz &\Tf \mtimes \Mid\\\hline \MTD& \Mz  }\mtimes
\matris{c}{\tT{}(k)\\\tL{}(k)}\mplus
\matris{c|c}{\Mid &\Mz\\\hline \Tg \mtimes \Mid\mplus\MTDa& \Mid  }\mtimes
\matris{c}{\tT{}(k-1)\\\tL{}(k-1)}\labeleq{sssync}}
}
Define the matrices
\NEn{\Azero = \matris{c|c}{\Mz &\Tf \mtimes \Mid\\\hline \MTD& \Mz  };~~~~~~
\Aone=\matris{c|c}{\Mid &\Mz\\\hline \Tg \mtimes \Mid\mplus\MTDa& \Mid  }
}
Consider the full state $\x$ defined as
\ZNE{\x(k)=[\tT{}^{T}(k)~\tL{}^{T}(k)]^T.}
Equation \refeq{sssync} can then be written in simplified notation:
 \NEn{\x(k)= \Azero\mtimes\x(k)\mplus \Aone\mtimes\x(k-1).\labeleq{implicit}}
Note that additional max-plus identity matrices $\Mid$ are introduced in the diagonal of matrix $\Aone$. This results in the extra trivial constraints $\tT{i}(k+1)\geq \tT{i}(k)$ and $\tL{i}(k+1)\geq \tL{i}(k)$, also resulting in the final system matrix (defined in page \pageref{eq.Adefinition}, equation \refeq{Adefinition}) being irreducible. This is observed later on in \reflem{irreducibility}.
%

%
%

\subsection{A gait parameterization}

Consider a general legged robot where a two-event circuit is associated to each leg. We present a parsimonious representation of a walking gait of a robot by grouping sets of legs and specifying in what order they are allowed to cycle. 
\begin{definition}
Let $n$ be the number of legs in the robot and define $m$ as a number of leg groups. Let $\leg{1},\dots,\leg{\m}$ be ordered sets of integers such that
\NEn{
&&\bigcup_{p=1}^\m\leg{p}=\{1,\dots,n\},~
\forall i\neq j, \leg{i}\cap\leg{j}=\emptyset,\text{~and~}\forall i, \leg{i} \neq\emptyset}
i.e., the sets $\leg{p}$ form a partition of $\{1,\dots,n\}$. 
A gait $\gait$ is defined as an ordering relation of groups of legs:
\NEn{\gait = \leg{1}\mporder\leg{2}\mporder\cdots\mporder\leg{\m}\labeleq{ordering}}
The \emph{gait space} is the set of all gaits that satisfy the previous definitions. 
\end{definition}
By considering that each $\leg{p}$ contains the indices of a set of legs that are synchronized in phase, the previous ordering relation is interpreted in the following manner: the set of legs indexed by $\leg{i}$ swings synchronously. As soon as all legs in $\leg{i}$ touchdown then all legs in $\leg{i+1}$ initiate their swing motion. The same is true for $\leg{m}$ and $\leg{1}$, closing the cycle.
For example, a trotting gait, where diagonal pairs of legs move synchronously, for a quadruped robot as illustrated in \reffig{zebro}, is represented by:
\NEn{\gait_{\mathrm{trot}}=\{1,4\}\mporder\{2,3\}\labeleq{trottinggait}}
\figura{zebro}{8.5cm}{Walking robots with recirculating legs inspired by RHex \cite{saranli01}. Zebro robot on the left and RQuad on the right both developed at DCSC, Delft University of Technology. The numbers represent the leg index numbering assumed in this paper.}{}
The gait space defined above can represent gaits for which all legs have the same cycle time. As such, gaits where one leg cycles twice while another cycles only once are not captured by this model. Examples of such gaits are not common, but have been used on hexapod robots to transverse very inclined slopes sideways \cite{weingarten03}.
With the previous notation we can now derive the matrices $\MTD$ and $\MTDa$ in equation \refeq{sssync}:
\NEn{
\left[\MTD\right]_{pq}\ele\lista{ll}{\!\Td&\forall j \in \{ 1,\dots,\m\!-\!1\}; \forall p\in \leg{j+1}; \forall q\in \leg{j}
\\\!\zz&\text{otherwise}}
\labeleq{Pdefinition}\\
\left[\MTDa\right]_{pq}\ele\lista{ll}{\!\Td&\forall p\in \leg{1}; \forall q\in \leg{m}
\\\!\zz&\text{otherwise}}\labeleq{Qdefinition}}
\examples{For the trotting gait $\gait_{\mathrm{trot}}$ we obtain:
\NEn{\MTD_{\mathrm{trot}}=\matris{cccc}{
\zz & \zz &\zz &\zz\\
\Td & \zz &\zz &\Td\\
\Td & \zz &\zz &\Td\\
\zz & \zz &\zz &\zz}\text{~~and~~}\MTDa_{\mathrm{trot}}=
\matris{cccc}{
\zz & \Td &\Td &\zz\\
\zz & \zz &\zz &\zz\\
\zz & \zz &\zz &\zz\\
\zz & \Td &\Td &\zz}
}
}
Define the function $\flat$ that transforms a gait into a vector of integers:
\NEn{
\flat&:&\{[\leg{1}]_{1},\dots,[\leg{1}]_{i_{1}}\}\mporder\cdots\mporder\{[\leg{m}]_{1},\dots,[\leg{m}]_{i_{m}}\}\mapsto\\\nonumber
&&\!\!\!\left[[\leg{1}]_{1},\dots,[\leg{1}]_{i_{1}}\dots[\leg{m}]_{1},\dots,[\leg{m}]_{i_{m}}\right]^{T}
}
Using again the previous trotting example we get that  $\flat\left(\gait_{\mathrm{trot}}\right)=[1~4~2~3]^{T}$ (the symbol \emph{flat} ``$\flat$'' is chosen since it ``flattens'' the ordered collection of ordered sets of a gait into a vector).
Note that the gaits $\{1,4\}\mporder\{2,3\}$ and $\{4,1\}\mporder\{2,3\}$ although resulting in indistinguishable motion in practice, have different mathematical representations since $\flat\left(\{1,4\}\mporder\{2,3\}\right)\neq\flat\left(\{4,1\}\mporder\{2,3\}\right)$.
\begin{definition}
A gait $\gaitn$ is called a \emph{normal gait} if the elements of the vector $\flat\left(\gaitn\right)$ are sorted increasingly.
\end{definition}
For a gait $\gait$, define the similarity matrix $\CCb\in \Rmax^{n \times n}$ as:
\NEn{
\left[\CCb\right]_{ij}=\lista{ll}{\ze&\text{~if~}\left[\flat(\gait)\right]_{i}=j\\
\ep&\text{~otherwise}}, \forall i,j\in\{1,\dots,n\}
}
The similarity matrix $\CCb$ is such that
\ZNE{\CCb\mtimes\CCb^{T}=\CCb^{T}\mtimes\CCb=\Mid.
}
\examples{
The similarity matrix associated with the trotting gait $\gait_{\mathrm{trot}}$ is:
\NEn{\CCb_{\mathrm{trot}}=
\matris{cccc}{
\ze &\zz &\zz &\zz\\
\zz &\zz &\zz &\ze\\
\zz &\ze &\zz &\zz\\
\zz &\zz &\ze &\zz}
=
\matris{cccc}{
~~\zze &\zzz &\zzz &\zzz\\
\zzz &\zzz &\zzz &~~\zze\\
\zzz &~~\zze &\zzz &\zzz\\
\zzz &\zzz &~~\zze &\zzz
}
}
here written in both max-plus and traditional algebra notation for legibility purposes.} The similarity matrix $\CCb$ has the property of ``normalizing'' the $\MTD$ and $\MTDa$ matrices to a max-plus algebraic lower triangular form $\MTDn$ and a max-plus algebraic upper triangular form $\MTDan$ respectively:
\NEn{
\MTDn=\CCb \mtimes\MTD \mtimes\CCb^{T}\labeleq{similarityA}\\
\MTDan=\CCb \mtimes\MTDa \mtimes\CCb^{T}\labeleq{similarityB}
}
%
\examples{Taking the previous example of the trotting gait, the normalized matrices take the form
\NEn{\MTDn_{\mathrm{trot}}=\matris{cccc}{
\zz & \zz &\zz &\zz\\
\zz & \zz &\zz &\zz\\
\Td & \Td &\zz &\zz\\
\Td & \Td &\zz &\zz}\text{~~and~~}\MTDan_{\mathrm{trot}}=
\matris{cccc}{
\zz & \zz &\Td &\Td\\
\zz & \zz &\Td &\Td\\
\zz & \zz &\zz &\zz\\
\zz & \zz &\zz &\zz}
}
which are generated by the normal gait $\{1,2\}\mporder\{3,4\}$.}
Let $\#\leg{i}$ represent the number of elements of the set $\leg{i}$. For a general normal gait 
\NEn{\gaitn = \leg{1}\mporder\leg{2}\mporder\cdots\mporder\leg{\m}}
with $\Monexb{\card{i}, \card{j}}=\Monex{\#\leg{i}\times\#\leg{j}}$
the structure of the matrices $\MTDn$ and $\MTDan$ is:
\shrinkornot{
\NEn{\MTDn \ele \matris{cccccc}{\Mz & & & \cdots&~~\Mz\\
\Td\mtimes\Monexb{\card{2}, \card{1}} &\Mz&&&~~\vdots\\
\Mz &\Td\mtimes\Monexb{\card{3}, \card{2}}&\Mz&&\\
\vdots && \ddots&&\\
\Mz&\cdots&&  \Td\mtimes\Monexb{\card{m}, \card{m-1}}&~~\Mz
}\labeleq{pmatrix}
\\
\MTDan \ele \matris{cc}{
\Mz&\Td\mtimes\Monexb{\card{1}, \card{m}}\\
\Mz&\Mz
}\labeleq{qmatrix}
}
}
{
\NEn{\MTDn = \matris{cccccc}{\Mz & & & \cdots&~~\Mz\\
\Td\mtimes\Monexb{\card{2}, \card{1}} &\Mz&&&~~\vdots\\
\Mz &\Td\mtimes\Monexb{\card{3}, \card{2}}&\Mz&&\\
\vdots && \ddots&&\\
\Mz&\cdots&&  \Td\mtimes\Monexb{\card{m}, \card{m-1}}&~~\Mz
}\labeleq{pmatrix};
~~~~~~~~\MTDan = \matris{cc}{
\Mz&\Td\mtimes\Monexb{\card{1}, \card{m}}\\
\Mz&\Mz
}\labeleq{qmatrix}
}
}
From equation \refeq{pmatrix} it is clear that the matrix $\MTDn$ is always max-plus nilpotent, since the upper triangle is max-plus zero. For max-plus powers of $\MTDn$ we obtain:
\shrinkornot{
\NE{
&&\MTDn^{\otimes 2} = \matris{cccccc}{
\Mz &  & \cdots &    & \Mz \\
\Mz & ~\Mz &  &    &  \\
\Td\mtimes\Monexb{\card{3}, \card{1}}& ~\Mz & \Mz &  &   \vdots \\
\vdots &   \ddots &  & &  \\
\Mz &\!\!\! \cdots~~ &   \Td\mtimes\Monexb{\card{m}, \card{m-2}}~ & \Mz~ & \Mz},\\
&&\MTDn^{\otimes (m-1)} = \matris{cccc}{
\Mz & \cdots &   & \Mz       \\
\vdots &   &   &     \multirow{2}{*}{\vdots}\\
\Mz &  \Mz &   &       \\
\Td\mtimes\Monexb{\card{m}, \card{1}} & ~\Mz & \cdots & \Mz 
  },\\
&&\MTDn^{\otimes m}= \Mz
}}
{
\NE{
\MTDn^{\otimes 2} = \matris{cccccc}{
\Mz &  & \cdots &    & \Mz \\
\Mz & ~\Mz &  &    &  \\
\Td\mtimes\Monexb{\card{3}, \card{1}}& ~\Mz & \Mz &  &   \vdots \\
\vdots &   \ddots &  & &  \\
\Mz &\!\!\! \cdots~~ &   \Td\mtimes\Monexb{\card{m}, \card{m-2}}~ & \Mz~ & \Mz}\!\!;~
\MTDn^{\otimes (m-1)} = \matris{cccc}{
\Mz & \cdots &   & \Mz       \\
\vdots &   &   &     \multirow{2}{*}{\vdots}\\
\Mz &  \Mz &   &       \\
\Td\mtimes\Monexb{\card{m}, \card{1}} & ~\Mz & \cdots & \Mz 
  }\!\!;~
\MTDn^{\otimes m}= \Mz
}}
\begin{lemma}\labellem{nilpotency}
Max-plus nilpotency is invariant to max-plus similarity transformations (e.g. as defined in equations \refeq{similarityA}, \refeq{similarityB}).
\end{lemma}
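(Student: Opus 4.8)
The plan is to reduce the claim to a single telescoping identity. Write $A$ for the original matrix and $B \bydefinition \CCb \mtimes A \mtimes \CCb^{T}$ for its similarity transform. I would first establish
\ZNE{\Mpower{B}{p} = \CCb \mtimes \Mpower{A}{p} \mtimes \CCb^{T} \quad \text{for all } p \geq 1,}
and then combine this with nilpotency of $A$ and the fact that $\ep$ is absorbing for $\mtimes$ to conclude.

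First I would prove the identity by induction on $p$. The base case $p=1$ is the definition of $B$. For the inductive step I would write $\Mpower{B}{p+1} = B \mtimes \Mpower{B}{p} = (\CCb \mtimes A \mtimes \CCb^{T}) \mtimes (\CCb \mtimes \Mpower{A}{p} \mtimes \CCb^{T})$ and use associativity of $\mtimes$ together with the given property $\CCb^{T} \mtimes \CCb = \Mid$ to collapse the inner factor $\CCb^{T} \mtimes \CCb$ to the identity. What remains is $\CCb \mtimes A \mtimes \Mpower{A}{p} \mtimes \CCb^{T} = \CCb \mtimes \Mpower{A}{p+1} \mtimes \CCb^{T}$, which closes the induction. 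Next I would invoke nilpotency of $A$: by the definition on the previous page there is a finite $p_0$ with $\Mpower{A}{p} = \Mz$ for all $p \geq p_0$. Substituting into the identity gives $\Mpower{B}{p} = \CCb \mtimes \Mz \mtimes \CCb^{T}$ for every such $p$, and it only remains to check that this product is $\Mz$. That follows because $\ep$ is absorbing for $\mtimes$: every entry of $\CCb \mtimes \Mz$, and hence of $\CCb \mtimes \Mz \mtimes \CCb^{T}$, is a max of terms each of which contains a factor $\ep$, so equals $\ep$. Thus $\Mpower{B}{p} = \Mz$ for all $p \geq p_0$, i.e. $B$ is nilpotent. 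The converse implication needs no separate work: since $A = \CCb^{T} \mtimes B \mtimes \CCb$ and $\CCb^{T}$ again satisfies the defining relations of a similarity matrix, $A$ is itself a similarity transform of $B$, so the same argument applies verbatim.

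I do not anticipate a genuine obstacle. The entire argument rests on the cancellation $\CCb^{T} \mtimes \CCb = \Mid$ and on the absorbing role of $\ep$, both of which are already in hand. The only point deserving mild care is the last one: one must resist concluding that $\CCb \mtimes \Mz \mtimes \CCb^{T}$ merely inherits the sparsity pattern of $\CCb$, and instead observe that the absorbing property of $\ep$ forces the product to collapse all the way to $\Mz$. Making that step explicit is what keeps the proof clean.
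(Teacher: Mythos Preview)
Your proposal is correct and follows essentially the same approach as the paper: both arguments establish $\Mpower{B}{p} = \CCb \mtimes \Mpower{A}{p} \mtimes \CCb^{T}$ via the telescoping cancellation $\CCb^{T}\mtimes\CCb=\Mid$, then invoke nilpotency of $A$. Your version is slightly more explicit (formal induction, justification of $\CCb\mtimes\Mz\mtimes\CCb^{T}=\Mz$, and the converse direction), but the underlying idea is identical.
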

\begin{proof}
Let $\bar{P}$ be nilpotent and $\bar{C}$ be a similarity matrix. Let $P=\bar{C}^{T}\otimes\bar{P}\otimes \bar{C}$. As such
\shrinkornot{
\NE{P^{\otimes p}\ele\left(\bar{C}^{T}\otimes \bar{P}\otimes \bar{C}\right)^{\otimes p}=\\
\ele \bar{C}^{T}\otimes \bar{P}\otimes \bar{C}\otimes \bar{C}^{T}\otimes \bar{P}\otimes \bar{C} \otimes \cdots \otimes \bar{C}^{T}\otimes \bar{P}\otimes \bar{C}\\
\ele \bar{C}^{T}\otimes \bar{P}^{\otimes p}\otimes \bar{C}}
}
{
\NEn{P^{\otimes p}=\left(\bar{C}^{T}\otimes \bar{P}\otimes \bar{C}\right)^{\otimes p}= \bar{C}^{T}\otimes \bar{P}\otimes \bar{C}\otimes \bar{C}^{T}\otimes \bar{P}\otimes \bar{C} \otimes \cdots \otimes \bar{C}^{T}\otimes \bar{P}\otimes \bar{C}
= \bar{C}^{T}\otimes \bar{P}^{\otimes p}\otimes \bar{C}}
}
and, 
\ZNE{\exists p_{0}> 0, \forall p\geq p_{0} :P^{\otimes p}= \bar{C}^{T}\otimes \bar{P}^{\otimes p}\otimes \bar{C}=\Mz.}\end{proof}
Given an arbitrary gait $\gait$ with associated matrices $\MTD$, $\MTDa$, $\Azero$, and $\Aone$ one can find the normal matrix $\MTDn$ which is max-plus nilpotent. From \reflem{nilpotency} then $\MTD$ is also max-plus nilpotent. In the beginning of \refsec{combinatorial} we have presented the synchronization equations \refeq{implicit} implicitly. However, if $\Azero^{*}$ exists then using equations (\ref{eq_max_plus_linear}) and \refeq{explicitsolution}, system \refeq{implicit} can be transformed into an explicit set of equations.
\begin{lemma}\labellem{a}
A sufficient condition for $\Azeros$ to exist is that the matrix $\MTD$ is nilpotent in the max-plus sense.
\end{lemma}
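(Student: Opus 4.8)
The plan is to deduce the existence of $\Azeros$ from the max-plus nilpotency of $\Azero$ itself. Indeed, if $\Azero$ is nilpotent then there is a finite $p_0$ with $\Mpower{\Azero}{p}=\Mz$ for all $p\geq p_0$, so the defining series $\Azeros=\bigoplus_{p=0}^{\infty}\Mpower{\Azero}{p}$ collapses to the finite max-plus sum $\bigoplus_{p=0}^{p_0-1}\Mpower{\Azero}{p}$, whose entries are well defined; hence $\Azeros$ exists. The entire argument therefore reduces to propagating the assumed nilpotency of $\MTD$ up to $\Azero$.

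First I would exploit the block structure $\Azero=\matris{c|c}{\Mz&\Tf\mtimes\Mid\\\hline\MTD&\Mz}$ and square it. A direct block multiplication gives $\Mpower{\Azero}{2}=\matris{cc}{\Tf\mtimes\MTD&\Mz\\\Mz&\Tf\mtimes\MTD}$, i.e. after two steps the off-diagonal coupling vanishes and each diagonal block is a single power of $\MTD$ scaled by $\Tf$. This is the key structural observation that makes the powers of $\Azero$ tractable.

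Next I would establish, by induction on $k\geq 0$, the bookkeeping formulas $\Mpower{\Azero}{2k}=\matris{cc}{\Mpower{\Tf}{k}\mtimes\MTD^{\otimes k}&\Mz\\\Mz&\Mpower{\Tf}{k}\mtimes\MTD^{\otimes k}}$ and $\Mpower{\Azero}{2k+1}=\matris{cc}{\Mz&\Mpower{\Tf}{k+1}\mtimes\MTD^{\otimes k}\\\Mpower{\Tf}{k}\mtimes\MTD^{\otimes(k+1)}&\Mz}$, each obtained from its predecessor by one further multiplication by $\Azero$. The fact to extract is that every nonzero block of $\Mpower{\Azero}{p}$ carries a factor $\MTD^{\otimes j}$ with $j\geq\lfloor p/2\rfloor$. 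Since $\MTD$ is nilpotent, there is $q_0$ with $\MTD^{\otimes j}=\Mz$ for all $j\geq q_0$; substituting into the block formulas forces $\Mpower{\Azero}{p}=\Mz$ as soon as $\lfloor p/2\rfloor\geq q_0$, that is for all $p\geq 2q_0$. Thus $\Azero$ is nilpotent with index at most $2q_0\leq 2n$, and the first paragraph finishes the proof.

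The computation is routine once the block reduction is seen, so the only real obstacle is the even/odd bookkeeping: correctly tracking which power of $\MTD$ and which power of the scalar $\Tf$ appear in each block, and checking that the induction step really produces $\MTD^{\otimes(k+1)}$ (and not $\MTD^{\otimes k}$) in the lower-left block of the odd case. If one prefers to avoid this algebra, an equivalent graph-theoretic route works: nilpotency of $\MTD$ makes $\mathcal{G}(\MTD)$ acyclic, and because any circuit of $\mathcal{G}(\Azero)$ must alternate between the two index blocks it projects onto a circuit of $\mathcal{G}(\MTD)$; hence $\mathcal{G}(\Azero)$ has no circuits, and by the standard criterion for convergence of $A^{*}$ the matrix $\Azeros$ exists. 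I would nonetheless present the explicit power computation as the primary argument, since it additionally exhibits the nilpotency index of $\Azero$.
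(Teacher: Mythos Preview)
Your proof is correct and follows essentially the same route as the paper: both compute the block powers $\Mpower{\Azero}{p}$ explicitly, observe that every non-$\Mz$ block contains a factor $\MTD^{\otimes j}$ with $j$ growing linearly in $p$, and conclude that nilpotency of $\MTD$ forces $\Azero$ to be nilpotent so that $\Azeros$ reduces to a finite max-plus sum. Your bookkeeping is in fact slightly tidier than the paper's (which has a typographical $\oplus$ where $\otimes$ is meant in the odd-$p$ lower-left block), and the graph-theoretic alternative you sketch is a valid shortcut not present in the original.
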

\begin{proof}
By direct computation, the repetitive products of $\Azero$ can be found to be
\NEn{\Azero^{\mtimes p}\!=\!\lista{lr}{\!\!\!\matris{c|c}{\Mz & \Tf^{\mtimes \frac{p+1}{2}} \mtimes \MTD^{\mtimes \frac{p-1}{2}}\\\hline
\Tf^{\mtimes \frac{p-1}{2}} \mplus \MTD^{\mtimes \frac{p+1}{2}} & \Mz} & \text{if $p$ is odd}\\[6mm]
\!\!\!\matris{c|c}{ \Tf^{\mtimes \frac{p}{2}} \mtimes \MTD^{\mtimes \frac{p}{2}}& \Mz\\\hline
\Mz&\Tf^{\mtimes \frac{p}{2}} \mtimes \MTD^{\mtimes \frac{p}{2}} } & \text{\!\!\!\!\!if $p$ is even}}}
If $\MTD$ is max-plus nilpotent, then there exists a finite positive integer $p_{0}$ such that $\forall p \geq p_{0} : \MTD^{\mtimes p} = \Mz \Rightarrow \Azero^{\mtimes (2 p+1)} = \Mz$, and therefore the max-plus sum for the computation of $\Azeros$ is finite:
\NEn{\Azeros=\bigoplus_{q=0}^{\infty}\Azero^{\mtimes q}=\bigoplus_{q=0}^{2p}\Azero^{\mtimes q}}
\end{proof}
Since $P$ is always max-plus nilpotent for gaits generated by expressions \refeq{Pdefinition} and \refeq{Qdefinition}, we conclude that $\Azeros$ is well defined. Let $\A$, that we call \emph{system matrix}, be defined by:
\NEn{\A=\Azero^{*}\otimes\Aone\labeleq{Adefinition}}
Equation \refeq{implicit} can be rewritten as:
\shrinkornot{
\NEn{\x(k)&=&\Azero\otimes\x(k)\oplus\Aone\otimes \x(k-1) \nonumber\\
&=&\Azero^{*}\otimes\Aone\otimes \x(k-1) \nonumber\\
&=&\A\mtimes\x(k-1).\labeleq{explicit}}}
{
\NEn{\x(k)=\Azero\otimes\x(k)\oplus\Aone\otimes \x(k-1) 
=\Azero^{*}\otimes\Aone\otimes \x(k-1) 
=\A\mtimes\x(k-1).\labeleq{explicit}}
}
For an arbitrary gait the internal structure of $\A$ can be quite complex. However, the gait $\gait$ associated to $\A$ can be transformed into a normal gait via a similarity transformation. Let
\NEn{
\CC=\matris{cc}{\CCb&\Mz\\\Mz&\CCb}
}
The similarity matrix $\CC$ transforms the system matrix $\A$ of an arbitrary gait $\gait$ into the system matrix $\An$ of a normal gait $\gaitn$ via the similarity transformation
\ZNE{
\An=\CC \mtimes\A \mtimes\CC^{T}.
}
This can be shown by direct computation:
\shrinkornot{
\NE{
&&\!\!\!\!\!\!\CC \mtimes\A \mtimes\CC^{T}=\CC \mtimes\Azero^{*}\mtimes\CC^{T}\mtimes\CC\otimes\Aone \mtimes\CC^{T}=\\[2mm]
\ele\matris{cc}{\CCb&\Mz\\\Mz&\CCb}\!\!\otimes\!\!
\matris{c|c}{\Mz &\Tf \mtimes \Mid\\\hline \MTD& \Mz  }^{*}\!\!\!\!\otimes\!\!\matris{cc}{\CCb&\Mz\\\Mz&\CCb}^{T}\!\!\otimes\\
&&\otimes\!\!\matris{cc}{\CCb&\Mz\\\Mz&\CCb}\!\!\otimes\!\!\matris{c|c}{\Mid &\Mz\\\hline \Tg \mtimes \Mid\mplus\MTDa& \Mid  }\!\!\otimes\!\!\matris{cc}{\CCb&\Mz\\\Mz&\CCb}^{T}=\\[3mm]
\ele
\bigoplus_{q=0}^{2m}\left(\matris{cc}{\CCb&\Mz\\\Mz&\CCb}\!\!\otimes\!\!
\matris{c|c}{\Mz &\Tf \mtimes \Mid\\\hline \MTD& \Mz  }\!\!\otimes\!\!\matris{cc}{\CCb&\Mz\\\Mz&\CCb}^{T}\right)^{\otimes q}\!\!\!\!\otimes\\
&&\otimes\!\!\matris{c|c}{\Mid & \Mz\\\hline \CCb \otimes \left(\Tg \mtimes \Mid\mplus\MTDa\right)\otimes \CCb^{T}& \Mid}=\\[3mm]
\ele
\left(\bigoplus_{q=0}^{2m}
{\underbrace{\matris{c|c}{\Mz &\Tf \mtimes \Mid\\\hline \MTDn& \Mz  }}_{\Azerob}}^{\otimes q}\right)\!\!\otimes
\!\!\underbrace{\matris{c|c}{\Mid & \Mz\\\hline \Tg \mtimes \Mid\mplus\MTDan& \Mid}}_{\Aoneb}=\\[2mm]
\ele \Azerosb\otimes\Aoneb=\Abar
}
}
{
\NEn{
&&\!\!\!\!\!\!\CC \mtimes\A \mtimes\CC^{T}=\CC \mtimes\Azero^{*}\mtimes\CC^{T}\mtimes\CC\otimes\Aone \mtimes\CC^{T}=\nonumber\\[2mm]
\ele\matris{cc}{\CCb&\Mz\\\Mz&\CCb}\!\!\otimes\!\!
\matris{c|c}{\Mz &\Tf \mtimes \Mid\\\hline \MTD& \Mz  }^{*}\!\!\!\!\otimes\!\!\matris{cc}{\CCb&\Mz\\\Mz&\CCb}^{T}\!\!\otimes\!\!\matris{cc}{\CCb&\Mz\\\Mz&\CCb}\!\!\otimes\!\!\matris{c|c}{\Mid &\Mz\\\hline \Tg \mtimes \Mid\mplus\MTDa& \Mid  }\!\!\otimes\!\!\matris{cc}{\CCb&\Mz\\\Mz&\CCb}^{T}=\nonumber\\[3mm]
\ele
\bigoplus_{q=0}^{2m}\left(\matris{cc}{\CCb&\Mz\\\Mz&\CCb}\!\!\otimes\!\!
\matris{c|c}{\Mz &\Tf \mtimes \Mid\\\hline \MTD& \Mz  }\!\!\otimes\!\!\matris{cc}{\CCb&\Mz\\\Mz&\CCb}^{T}\right)^{\otimes q}\!\!\!\!\otimes\!\!\matris{c|c}{\Mid & \Mz\\\hline \CCb \otimes \left(\Tg \mtimes \Mid\mplus\MTDa\right)\otimes \CCb^{T}& \Mid}=\nonumber\\[3mm]
\ele
\left(\bigoplus_{q=0}^{2m}
{\underbrace{\matris{c|c}{\Mz &\Tf \mtimes \Mid\\\hline \MTDn& \Mz  }}_{\Azerob}}^{\otimes q}\right)\!\!\otimes
\!\!\underbrace{\matris{c|c}{\Mid & \Mz\\\hline \Tg \mtimes \Mid\mplus\MTDan& \Mid}}_{\Aoneb}= \Azerosb\otimes\Aoneb=\Abar
}
}
Transforming an arbitrary gait into a normal gait is very useful since, by effectively switching rows and columns in $\A$, one obtains a very structured matrix $\An$ where analysis is simple. The interpretation of the similarity matrix $\CCb$ is that legs can be renumbered, simplifying algebraic manipulation. Besides max-plus nilpotency, other properties are invariant to similarity transformations: irreducibility is preserved since the graphs of $\A$ and $\An$ are equivalent up to a label renaming. Max-plus eigenvalues and eigenvectors are related by:
\shrinkornot{
\NE{
&&\A \otimes \ev = \eg \otimes \ev\\
&\Leftrightarrow& \CC\otimes \A\otimes \CC^{T}\otimes \CC\otimes \ev = \eg \otimes\CC\otimes \ev\\
&\Leftrightarrow& \Abar \otimes \bar{\ev} = \eg \otimes \bar{\ev},\text{~with~}\bar{\ev}= \CC\otimes \ev
}}
{
\NEn{
\A \otimes \ev = \eg \otimes \ev
\Leftrightarrow \CC\otimes \A\otimes \CC^{T}\otimes \CC\otimes \ev = \eg \otimes\CC\otimes \ev\Leftrightarrow \Abar \otimes \bar{\ev} = \eg \otimes \bar{\ev},\text{~with~}\bar{\ev}= \CC\otimes \ev
}}
\subsection{Structure of the system matrix $\bar{A}$}
Let:
\NEn       {\Tz=\Tf\mtimes\Td\text{~and~}\taugamma=\Tf\mtimes\Tg}
The structure of $\bar{A}$ can be obtained via a laborious but straightforward set of algebraic manipulations. For an arbitrary gait $\gait$ we compute the normal gait $\gaitn$ via the similarity transformation with the matrix $\CC$. By observing the structures of $\An_{0}$ and $\An_{1}$ (derived from $\MTDn$ and $\MTDan$) a closed-form solution can be obtained for $\An_{0}^{*}$:
\NEn{\An_{0}^{*}=\matris{cc}{\Md&~~~~\Tf\mtimes\Md\\\Mdd&~~~~\Md}
}
where $\Md=(\Tf\mtimes \MTDn)^{*}$, illustrated in equation \refeq{deltamatrix} on page \pageref{eq.deltamatrix}. The matrix $\Mdd$ is defined in equation \refeq{Mdddefinition} again on page \pageref{eq.Mdddefinition}. Note that \mbox{$\Tf\mtimes\Mdd\mplus\Mid=\Md$} and $\Md \geq \Mdd$. An expression for $\An$ is then obtained:
\shrinkornot{
\NEn{
\An\ele\An_{0}^{*}\mtimes\An_{1}\nonumber\\
\ele\matris{cc}{\Md&~~~~\Tf\mtimes\Md\\\Mdd&~~~~\Md}\mtimes\matris{cc}{\Mid&~~~~\Mz\\\Tg\mtimes\Mid\mplus\MTDan&~~~~\Mid}\labeleq{xxxxxxxxx}
\\
\ele\matris{cc}
{\Md\mplus\Tf\mtimes\Tg\mtimes\Md\mplus\Tf\mtimes\Md\mtimes\MTDan  &~~~~ \Tf\mtimes\Md\\
\Mdd\mplus\Tg\mtimes\Md\mplus\Md\mtimes\MTDan  & ~~~~\Md
}
\nonumber}
}
{
\NEn{
\An\ele\An_{0}^{*}\mtimes\An_{1}=\matris{cc}{\Md&~~~~\Tf\mtimes\Md\\\Mdd&~~~~\Md}\mtimes\matris{cc}{\Mid&~~~~\Mz\\\Tg\mtimes\Mid\mplus\MTDan&~~~~\Mid}\labeleq{xxxxxxxxx}
\\
\ele\matris{cc}
{\Md\mplus\Tf\mtimes\Tg\mtimes\Md\mplus\Tf\mtimes\Md\mtimes\MTDan  &~~~~ \Tf\mtimes\Md\\
\Mdd\mplus\Tg\mtimes\Md\mplus\Md\mtimes\MTDan  & ~~~~\Md
}
\nonumber}
}
Let $\Mv=\Md\mtimes\MTDan$, as illustrated by equation \refeq{vmatrix}. One can show that: 
\NEn{
&&\Md\mtimes\Md=\Md\labeleq{minusa}\\ 
&&\Md\mtimes\Mv=\Mv\\
&&\Mv\mtimes\Mv=\Tdf^{\mtimes(m-1)}\mtimes\Td\mtimes\Mv\labeleq{minus}
}
Since $\mu\mtimes\Md\geq\Md$ for any $\mu>0$, and $\Md \geq \Mdd$, expression \refeq{xxxxxxxxx} simplifies to:
\NEn{
\An\ele\matris{cc}
{\Tf\mtimes(\Tg\mtimes\Md\mplus\Mv)  & ~~~~\Tf\mtimes\Md\\
\Tg\mtimes\Md\mplus\Mv  & ~~~~\Md
}\labeleq{irreducible}
.}
\shrinkornot{
\NE{\text{\!Let~}t_{\leg{i}}(k)\ele\matris{cccc}{t_{[\leg{i}]_{1}}(k)&t_{[\leg{i}]_{2}}(k)&\cdots&t_{[\leg{i}]_{\#\leg{i}}}(k)}^{T}\\
\text{and~}l_{\leg{i}}(k)\ele\matris{cccc}{l_{[\leg{i}]_{1}}(k)&l_{[\leg{i}]_{2}}(k)&\cdots&l_{[\leg{i}]_{\#\leg{i}}}(k)}^{T}\!\!.}}
{
\NE{\text{\!Let~}t_{\leg{i}}(k)=\matris{cccc}{t_{[\leg{i}]_{1}}(k)&t_{[\leg{i}]_{2}}(k)&\cdots&t_{[\leg{i}]_{\#\leg{i}}}(k)}^{T}
\text{and~}l_{\leg{i}}(k)=\matris{cccc}{l_{[\leg{i}]_{1}}(k)&l_{[\leg{i}]_{2}}(k)&\cdots&l_{[\leg{i}]_{\#\leg{i}}}(k)}^{T}\!\!.}}
Equations \refeq{verybigAexplicitStart}--\refeq{verybigAexplicit} illustrate the resulting structure of $\bar{A}$ written in the system form $\bar{x}(k)=\bar{A}\otimes\bar{x}(k-1)$, with $\bar{x}(k)=C\otimes x(k)$, and $\Midxb{i}=\Midx{\#\leg{i}}$. In equation \refeq{verybigAexplicit}, the $\otimes$ operator is omitted in unambiguous locations due to space limitations.

\renewcommand{\card}[1]{#1}
\newcommand{\cardtimes}{,}
\newcommand{\diaga}{\taugamma\mmtimes}
\newcommand{\diagb}{\Tf\mmtimes}
\newcommand{\diagc}{\Tg\mmtimes}
\newcommand{\diagd}{}
\newcommand{\mmtimes}{}
\newcommand{\pmmtimes}{\mtimes}
\newcommand{\mmplus}{+}
\newcommand{\hide}[1]{}

\shrinkornot{\newcommand{\schrink}{\!\!}}
{\newcommand{\schrink}{\!\!\!}}

\newcommand{\rotate}[2]{\mathrel{\rotatebox[origin=c]{#1}{$#2$}}}

\shrinkornot{\newcommand{\elle}{&\!\!\!\!=\!\!\!\!&}}
{\newcommand{\elle}{&\!\!\!\!\!\!=\!\!\!\!\!\!&}}

\begin{figure*}
\shrinkornot{\normalsize}{\footnotesize}
\NEn{\Mdd =  \Td\mtimes\matris{cccccc}{\Mz & &~~~~~~~~~ & \cdots~~~~~~&\Mz\\
\Monexb{\card{2} \cardtimes \card{1}} &\Mz&&&\vdots\\
\Tdf\mtimes\Monexb{\card{3} \cardtimes \card{1}} &\Monexb{\card{3} \cardtimes \card{2}}&\Mz&&\\
\vdots &&\rotate{15}{\ddots}&\rotate{15}{\ddots}~~~~~~&\\
\Tdf^{\mtimes (m-2)}\mtimes\Monexb{\card{m} \cardtimes \card{1}} &\cdots& \Tdf\mtimes\Monexb{\card{m} \cardtimes \card{m-1}}&~~~ \Monexb{\card{m} \cardtimes \card{m-2}}~~~&\Mz}\labeleq{Mdddefinition}
}
\shrinkornot{}{\\[-10mm]}
\NEn{\Md = \matris{cccccc}{\Midxb{\card{1}} & & ~~~~~~~~~& \cdots~~~~&\Mz\\
\Tdf\mtimes\Monexb{\card{2}\cardtimes \card{1}} &\Midxb{\card{2}}&&&\vdots\\
\Tdf^{\mtimes 2}\mtimes\Monexb{\card{3}\cardtimes \card{1}} &\Tdf\mtimes\Monexb{\card{3}\cardtimes \card{2}}&\Midxb{\card{3}}&&\\
\vdots &&\rotate{22}{\ddots}&\rotate{22}{\ddots}~~~~&\\
\Tdf^{\mtimes (m-1)}\mtimes\Monexb{\card{m}\cardtimes \card{1}} &\cdots& \Tdf^{\mtimes 2}\mtimes\Monexb{\card{m}\cardtimes \card{m-2}}&~~~ \Tdf\mtimes\Monexb{\card{m}\cardtimes \card{m-1}}~~~&\Midxb{\card{m}}
}\labeleq{deltamatrix}
}
\shrinkornot{}{\\[-10mm]}
\NEn{\Mv=\matris{c|c}{\Mz_{n\cardtimes (n-\card{m})}&\begin{array}{c} 
\Td\mtimes\Monexb{\card{1}\cardtimes \card{m}} \\
\Td\mtimes\Tdf\mtimes\Monexb{\card{2}\cardtimes \card{m}} \\
\vdots\\
\Td\mtimes\Tdf^{\mtimes(m-1)}\mtimes\Monexb{\card{m}\cardtimes \card{m}} \\
\end{array}
}\labeleq{vmatrix}
}
\shrinkornot{}{\\[-15mm]}
\NEn{\bar{x}(k)\elle\bar{A}\otimes\bar{x}(k-1)\Leftrightarrow\labeleq{verybigAexplicitStart}
\\
\bar{x}(k)\elle\matris{c||c}
{\Tf\mtimes(\Tg\mtimes\Md\mplus\Mv)  ~~& ~~\Tf\mtimes\Md\\\hline\hline
\Tg\mtimes\Md\mplus\Mv  ~~& ~~\Md
}\otimes\bar{x}(k-1) \Leftrightarrow\labeleq{verybigAexplicitA}
\\
\matris{c}{
\schrink\tT{\leg{1}}(k)\schrink\\\vdots\\\hline \schrink\tT{\leg{m}}(k)\schrink\\\hline\hline
\schrink\tL{\leg{1}}(k)\schrink\\\vdots\\\hline \schrink\tL{\leg{m}}(k)\schrink\
}
\elle
\matris{c|c||c}
{
\multirow{2}{*}{$A_{11}$} & \multirow{2}{*}{$A_{12}$} & \multirow{2}{*}{$A_{13}$}\\&&\\\hline
 ~~~~~A_{21}~~~~~& A_{22} & ~~~~~A_{23}~~~~~\\\hline\hline
\multirow{2}{*}{$A_{31}$} & \multirow{2}{*}{$A_{32}$} & \multirow{2}{*}{$A_{33}$}\\&&\\\hline
A_{41} & A_{42} & A_{43}
}
\schrink\otimes\schrink
\matris{c}{
\schrink\tT{\leg{1}}(k-1)\schrink\\\vdots\\\hline \schrink\tT{\leg{m}}(k-1)\schrink\\\hline\hline
\schrink\tL{\leg{1}}(k-1)\schrink\\\vdots\\\hline \schrink\tL{\leg{m}}(k-1)\schrink\
}\Leftrightarrow\labeleq{verybigAexplicitB}
\\
\underbrace{\matris{c}{\schrink\tT{\leg{1}}(k)\schrink\\[0.8mm] \schrink\tT{\leg{2}}(k)\schrink\\[0.8mm] \schrink\vdots\schrink  \\[0.8mm] \hline\schrink\tT{\leg{m}}(k)\schrink\\[0.8mm]\hline\hline \schrink\tL{\leg{1}}(k)\schrink\\[0.8mm] \schrink\tL{\leg{2}}(k)\schrink\\[0.8mm]\vdots  \\[0.8mm] \hline\schrink \tL{\leg{m}}(k)\schrink}}_{\bar{x}(k)}
\elle
\underbrace{\matris{ccc|c||ccc}{
\diaga\Midxb{\card{1}} &   \cdots&\Mz&\Tdf\mmtimes\Monexb{\card{1}\cardtimes \card{m}}&
\diagb\Midxb{\card{1}} &  \cdots&\Mz\\
\diaga\Tdf\mmtimes\Monexb{\card{2}\cardtimes \card{1}} &\schrink\diaga\Midxb{\card{2}}\schrink&\vdots&\Tdf^{\pmmtimes 2}\mmtimes\Monexb{\card{2}\cardtimes \card{m}}&
\diagb\Tdf\mmtimes\Monexb{\card{2}\cardtimes \card{1}} &\schrink\diagb\Midxb{\card{2}}\schrink&\vdots\\
\vdots &&\ddots&\vdots&
\vdots &\ddots&\\\hline
\schrink\diaga\Tdf^{\pmmtimes (m-1)}\mmtimes\Monexb{\card{m}\cardtimes \card{1}} \schrink&\schrink\cdots\schrink&\schrink\diaga \Tdf\mmtimes\Monexb{\card{m}\cardtimes \card{m-1}}&\schrink\diaga\Midxb{\card{m}}\mplus\Tdf^{\pmmtimes m}\mmtimes\Monexb{\card{m}\cardtimes \card{m}}\schrink&
\diagb\Tdf^{\pmmtimes (m-1)}\mmtimes\Monexb{\card{m}\cardtimes \card{1}}\schrink &\cdots&\schrink\diagb\Midxb{\card{m}}\schrink\\\hline\hline
\diagc\Midxb{\card{1}} &   \cdots&\Mz&\Td\mmtimes\Monexb{\card{1}\cardtimes \card{m}}&
\diagd\Midxb{\card{1}} &  \cdots&\Mz\\
\diagc\Tdf\mmtimes\Monexb{\card{2}\cardtimes \card{1}} &\schrink\diagc\Midxb{\card{2}}\schrink&\vdots&\Td\mmtimes\Tdf\mmtimes\Monexb{\card{2}\cardtimes \card{m}}&
\Tdf\mmtimes\Monexb{\card{2}\cardtimes \card{1}} &\diagd\Midxb{\card{2}}&\vdots\\
\vdots &&\ddots&\vdots&
\vdots &\ddots&\\\hline
\schrink\diagc\Tdf^{\pmmtimes (m-1)}\mmtimes\Monexb{\card{m}\cardtimes \card{1}} \schrink&\schrink\cdots\schrink&\schrink\diagc \Tdf\mmtimes\Monexb{\card{m}\cardtimes \card{m-1}}\schrink&\diagc\Midxb{\card{m}}\mplus\Td\mmtimes\Tdf^{\pmmtimes(m-1)}\mmtimes\Monexb{\card{m}\cardtimes \card{m}}&
\Tdf^{\pmmtimes (m-1)}\mmtimes\Monexb{\card{m}\cardtimes \card{1}} &\cdots&\Midxb{\card{m}}
}}_{\bar{A}}
\schrink\mtimes\schrink
\underbrace{\matris{c}{\schrink\tT{\leg{1}}(k-1)\schrink\\[0.8mm] \schrink\tT{\leg{2}}(k-1)\schrink\\[0.8mm] \vdots \\[0.8mm] \hline\schrink\tT{\leg{m}}(k-1)\schrink\\[0.8mm]\hline\hline \schrink\tL{\leg{1}}(k-1)\schrink\\[0.8mm] \schrink\tL{\leg{2}}(k-1)\schrink\\[0.8mm]\vdots \\[0.8mm]\hline \schrink\tL{\leg{m}}(k-1)\schrink}}_{\bar{x}(k-1)}
\labeleq{verybigAexplicit}}
%
%
\end{figure*}

%
%

\subsection{Max-plus eigenstructure of the system matrix}\labelsec{eigen}
\labelsec{eigenstructure}

Given a parameterization of a gait, it is fundamental to understand whether the system $x(k)=A\otimes x(k-1)$ reaches a unique steady state behavior. In robotics this is the equivalent of asking ``does the robot walk/run as specified? Is it robust to disturbances?''. These questions are answered by analyzing the max-plus eigenstructure of the system matrix: a unique eigenvalue means that the legs have a unique cycle time, and a unique (up to scaling) eigenvector means that the legs always reach the same motion pattern, independently of the initial condition or disturbances. The results obtained below use various analysis techniques available for max-plus linear systems. This is necessary due to the intrinsic time structure associated with the problem. Petri net tools (e.g. incidence matrices) can be used to understand structural properties of the system, such as irreducibility, but temporal properties are better analyzed using max-plus linear tools. In \refsec{graph} we show that for a fixed structure (i.e. a single Petri net) unique or non-unique eigenvectors are found by changing the holding time parameters. This result could not be captured by the Petri net structure alone. The analysis steps presented from here on are summarized in \reffig{layout}. 

Consider the following assumption (which is always satisfied in practice since the leg swing and stance times are always positive numbers):\\[-2mm]
\begin{assumption}[A1] 
$\Tg,\Tf>0$\\[-1mm]
\end{assumption}
\begin{lemma}\labelth{existanceuniqueness}
If assumption A1 is satisfied then 
\NEn{\eg \bydefinition  \Mpower{\Tz}{\m}\mplus \taugamma\labeleq{eigenvaluedefinition}}
 is a max-plus eigenvalue of the system matrix $\A$ (and $\Abar$) defined by equations \refeq{explicit} (and \refeq{verybigAexplicitStart}--\refeq{verybigAexplicit}), and $\ev\in\Rmax^{2n}$ defined by
\NEn{
\forall j\in\{1,\dots,\m\}, \forall q\in \leg{j}:&&[\ev]_{q}\bydefinition\Tf\mtimes \Mpower{\Tz}{j-1}\labeleq{eigenvectordefinitionA}\\
&&[\ev]_{q+n}\bydefinition\Mpower{\Tz}{j-1}\labeleq{eigenvectordefinitionB}
}
is a max-plus eigenvector of $A$.
\end{lemma}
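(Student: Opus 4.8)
The plan is to verify the eigenpair on the \emph{normalized} system matrix $\Abar$ rather than on $\A$ directly, since $\Abar$ carries the explicit block form \refeq{irreducible}. By the eigenvector similarity relation $\A\otimes\ev=\eg\otimes\ev\Leftrightarrow\Abar\otimes\bar\ev=\eg\otimes\bar\ev$ with $\bar\ev=\CC\otimes\ev$ derived above, it suffices to show $\Abar\otimes\bar\ev=\eg\otimes\bar\ev$. First I would record the block form of $\bar\ev$: writing $\bar\ev=[\,\ev_t^{T}~~\ev_l^{T}\,]^{T}$, the permutation $\CCb$ carries the entries \refeq{eigenvectordefinitionA}--\refeq{eigenvectordefinitionB} onto the contiguous groups of the normal gait, so the $j$-th group-block of $\ev_l$ is the constant vector with all entries equal to $\Mpower{\Tz}{j-1}$, and $\ev_t=\Tf\otimes\ev_l$.

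Next, reading off \refeq{irreducible}, the entire top block-row of $\Abar$ equals $\Tf\otimes$(its bottom block-row); hence $(\Abar\otimes\bar\ev)$ has its top block equal to $\Tf\otimes$(its bottom block). Because the target $\eg\otimes\bar\ev$ has $\eg\otimes\ev_t=\Tf\otimes(\eg\otimes\ev_l)$, the top-block identity follows automatically once the bottom-block identity $(\Tg\mtimes\Md\mplus\Mv)\otimes\ev_t\mplus\Md\otimes\ev_l=\eg\otimes\ev_l$ is established. Substituting $\ev_t=\Tf\otimes\ev_l$ and using $\Tf\mtimes\Tg=\taugamma$, this reduces to $(\taugamma\mtimes\Md\mplus\Tf\mtimes\Mv\mplus\Md)\otimes\ev_l=\eg\otimes\ev_l$. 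Assumption A1 gives $\taugamma>0$, so $\taugamma\mtimes\Md\mplus\Md=\taugamma\mtimes\Md$ (since $\mu\mtimes\Md\geq\Md$ for $\mu>0$), leaving $(\taugamma\mtimes\Md\mplus\Tf\mtimes\Mv)\otimes\ev_l$.

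The computational core is then two products. For the first, reading \refeq{deltamatrix}, the first block-column of $\Md$ has $i$-th block $\Mpower{\Tz}{i-1}$ times a max-plus ``one'' block, so $\ev_l=\Md\otimes u$ where $u$ is the max-plus indicator of the first leg-group; consequently $\Md\otimes\ev_l=\Md\mtimes\Md\otimes u=\Md\otimes u=\ev_l$ by the idempotence $\Md\mtimes\Md=\Md$ of \refeq{minusa}, whence $\taugamma\mtimes\Md\otimes\ev_l=\taugamma\otimes\ev_l$. For the second, the matrix $\Mv$ of \refeq{vmatrix} has a single nonzero block-column (the group-$m$ column), so the $i$-th block of $\Mv\otimes\ev_l$ picks up only the group-$m$ block of $\ev_l$ and equals $\Td\mtimes\Mpower{\Tz}{i+m-2}$ times a one-block; multiplying by $\Tf$ and using $\Tf\mtimes\Td=\Tz$ turns this into $\Mpower{\Tz}{i+m-1}$.

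Finally I would combine the two: the $i$-th block of $(\taugamma\mtimes\Md\mplus\Tf\mtimes\Mv)\otimes\ev_l$ is $\bigl(\taugamma\mtimes\Mpower{\Tz}{i-1}\mplus\Mpower{\Tz}{i+m-1}\bigr)$ times a one-block. Factoring out $\Mpower{\Tz}{i-1}$ (legitimate since $\Tz\in\Reals$) leaves $\taugamma\mplus\Mpower{\Tz}{m}$, which is exactly $\eg$ by \refeq{eigenvaluedefinition}; hence each block equals $\eg\otimes(\ev_l)_i$, the bottom-block identity holds, and the top block follows by the factor-$\Tf$ observation. Since every entry of $\ev$ is a finite real, $\ev\neq\Mz_{n\times1}$, so it is a genuine eigenvector. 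I expect the only real obstacle to be the block bookkeeping in the two products $\Md\otimes\ev_l$ and $\Mv\otimes\ev_l$ — tracking which group-block contributes and collapsing the ``one'' blocks correctly; once these are pinned down, the eigenvalue identity drops out verbatim from the definition of $\eg$.
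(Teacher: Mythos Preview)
Your proof is correct, but it proceeds along a genuinely different route from the paper's. The paper does \emph{not} use the explicit block form \refeq{irreducible} of $\Abar$; instead it goes back to the implicit description $\x(k)=\Azero\otimes\x(k)\oplus\Aone\otimes\x(k-1)$ and observes that verifying $\eg\otimes\ev=\Azero\otimes(\eg\otimes\ev)\oplus\Aone\otimes\ev$ already gives $\A\otimes\ev=\Azeros\otimes\Aone\otimes\ev=\eg\otimes\ev$. From there the paper reduces, as you do, to a bottom-half identity, but written in terms of the raw synchronization matrices $\MTD,\MTDa$ rather than $\Md,\Mv$: namely $\eg\otimes\evb=\taugamma\otimes\evb\oplus\Tz\otimes(\eg\otimes\MTDz\oplus\MTDaz)\otimes\evb$. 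It then checks this entrywise, splitting into two cases (rows with index in $\leg{1}$, where only $\MTDa$ contributes, versus rows in $\leg{2},\dots,\leg{m}$, where only $\MTD$ contributes).

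Your argument is more matrix-algebraic: you exploit the already-derived block structure of $\Abar$ and the idempotence $\Md\mtimes\Md=\Md$ of \refeq{minusa} to get $\Md\otimes\ev_l=\ev_l$ in one stroke (via the nice observation $\ev_l=\Md\otimes u$), then read off $\Mv\otimes\ev_l$ from the single nonzero block-column of $\Mv$. This avoids the row-by-row case split entirely and makes the appearance of $\eg=\Mpower{\Tz}{m}\mplus\taugamma$ transparent. The trade-off is that your proof leans on machinery developed \emph{after} the eigenpair lemma in the paper's exposition (the explicit $\Abar$ and identity \refeq{minusa}), whereas the paper's proof is self-contained at the level of $\Azero,\Aone,\MTD,\MTDa$ and works directly on $A$ without passing through the similarity to $\Abar$.
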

\begin{proof}
With $\evb\in\Rmax^{n}$, let 
$[\evb]_{q}=[\ev]_{q+n}$ for all $j$ and $q\in \leg{j}$. Then  
$\ev=[(\Tf\mtimes\evb)^T~~\evb^T]^T$. 
Recall equations (\ref{eq_max_plus_linear}) and \refeq{explicitsolution} with new variables $z$ and $B$ such that $z= B\otimes z\oplus b$ with solution $z= B^{*}\otimes b$. Now let $z=\eg\otimes\ev$, $B=\Azero$, and $b=\Aone\otimes \ev$. We obtain
\shrinkornot{
\NE{\eg\otimes\ev \ele \Azero \otimes \eg\otimes\ev \oplus \Aone\otimes \ev\\
 \ele \Azeros \otimes \Aone\otimes \ev=\A\otimes \ev
}}
{
\NEn{\eg\otimes\ev \ele \Azero \otimes \eg\otimes\ev \oplus \Aone\otimes \ev \Azeros \otimes \Aone\otimes \ev=\A\otimes \ev
}}
Given the previous result, it is sufficient to show that if $\eg$ and $\ev$ are a max-plus eigenvalue and eigenvector of $\A$ respectively, then replacing the state variable $\x(k-1)$ by $\ev$ and $\x(k)$ by $\eg\mtimes\ev$ in equation \refeq{sssync} holds true:
\shrinkornot{
\NE{
&&\eg\mtimes\ev = \eg\mtimes
\matris{c}{\Tf\mtimes\evb\\\evb} =\\
&&\eg\mtimes\matris{c|c}{\Mz &\Tf \mtimes \Mid\\\hline \MTD& \Mz  }\mtimes
\ev \mplus
\matris{c|c}{\Mid &\Mz\\\hline \Tg \mtimes \Mid\mplus\MTDa& \Mid  }\mtimes
\ev=\\
&&
\matris{c|c}{\Mid & \eg\mtimes\Tf\mtimes\Mid\\\hline 
\eg \mtimes \MTD\mplus \Tg \mtimes \Mid\mplus\MTDa &\Mid}\mtimes
\matris{c}{\Tf\mtimes\evb\\\evb}
}}
{
\NEn{
\nonumber&&\eg\mtimes\ev = \eg\mtimes
\matris{c}{\Tf\mtimes\evb\\\evb} =\eg\mtimes\matris{c|c}{\Mz &\Tf \mtimes \Mid\\\hline \MTD& \Mz  }\mtimes
\ev \mplus
\matris{c|c}{\Mid &\Mz\\\hline \Tg \mtimes \Mid\mplus\MTDa& \Mid  }\mtimes
\ev=\\
&&
\matris{c|c}{\Mid & \eg\mtimes\Tf\mtimes\Mid\\\hline 
\eg \mtimes \MTD\mplus \Tg \mtimes \Mid\mplus\MTDa &\Mid}\mtimes
\matris{c}{\Tf\mtimes\evb\\\evb}
}}
The previous expression is equivalent to the following two equations:
\NEn{\eg \mtimes\Tf \mtimes \evb \ele \Tf\mtimes \evb \mplus \eg \mtimes\Tf \mtimes \evb \labeleq{twoequationsA}\\
\eg \mtimes \evb \ele\Tf\mtimes(\eg \mtimes \MTD\mplus \Tg \mtimes \Mid\mplus\MTDa )\mtimes\evb\mplus\evb\labeleq{twoequationsB}
}
Since $\eg>0$ (by assumption A1), equation \refeq{twoequationsA} is always verified. Thus we focus on equation \refeq{twoequationsB}, which can be simplified due to $\Tf\mtimes\Tg>0$:
\NEn{\eg \mtimes \evb \ele(\Tf\mtimes\Tg) \mtimes \evb\mplus\Tf\mtimes(\eg \mtimes \MTD\mplus \MTDa )\mtimes\evb\labeleq{zc}}
Let $\Td\mtimes\MTDz=\MTD$ and $\Td\mtimes\MTDaz=\MTDa$, i.e., all entries of matrices $\MTDz$ and $\MTDaz$ are either $\ze$ or $\zz$ to obtain (recall that $\Tz=\Tf\mtimes\Td$ and $\taugamma=\Tf\mtimes\Tg$):
\NEn{\eg \mtimes \evb \ele \taugamma\mtimes \evb\mplus\Tz\mtimes (\eg\mtimes \MTDz\mplus  \MTDaz )\mtimes\evb \labeleq{zb}}
We now consider two cases:\\\\
i) First we analyze the row indices of equation \refeq{zb} that are elements of the sets $\leg{2},\dots,\leg{m}$. For each $j\in\{1,\dots,m-1\}$ and for each row $p \in \leg{j+1}$ we obtain (notice that according to \refeq{Qdefinition} all the elements of $\left[\MTDaz\right]_{p,\cdot}$ are $\zz$ since $p \notin \leg{1}$, and that $\left[\evb\right]_{p}=\Mpower{\Tz}{j}$ for $p\in\leg{j+1}$):
\NEn{
\left[\eg\mtimes\evb\right]_{p} \ele \left[ \taugamma\mtimes \evb\right]_{p}\mplus \Tz\mtimes\left[\eg \mtimes\MTDz\mplus \MTDaz \mtimes \evb\right]_{p} \Leftrightarrow\\
\eg\mtimes\left[\evb\right]_{p} \ele\taugamma\mtimes\left[\evb\right]_{p}\mplus \Tz\mtimes\left[\eg\mtimes\MTDz\right]_{p,\cdot}\mtimes \evb\mplus  \underbrace{\left[\MTDaz\right]_{p,\cdot}}_{\zz} \mtimes \evb\Leftrightarrow\\
\eg\mtimes\Mpower{\Tz}{j} \ele\taugamma\mtimes\Mpower{\Tz}{j}\mplus \Tz\mtimes \bigoplus_{q\in\leg{j}}
\eg\mtimes\underbrace{\left[\MTDz\right]_{p,q}}_{\ze}\mtimes\left[\evb\right]_{q} \Leftrightarrow\\
\eg\mtimes\Mpower{\Tz}{j} \ele \taugamma\mtimes\Mpower{\Tz}{j}\mplus\Tz\mtimes \eg\mtimes \Mpower{\Tz}{j-1} \Leftrightarrow\\
\eg\mtimes\Mpower{\Tz}{j} \ele \taugamma\mtimes\Mpower{\Tz}{j}\mplus\eg\mtimes\Mpower{\Tz}{j}
}
The last term always holds true since $\eg \geq \taugamma$.
Thus for rows $p\in \leg{2},\dots,\leg{m}$ equation \refeq{zb} holds true.\\
ii) We now look at all the remaining rows $p$ such that $p \in \leg{1}$ (noticing now that according to \refeq{Pdefinition} all the elements of $\left[\MTDz\right]_{p,\cdot}$ are $\zz$ and that $\left[\evb\right]_{p}= \ze$ since $p \in \leg{1}$):
\NEn{
\left[\eg\mtimes\evb\right]_{p} \ele \left[\taugamma\mtimes\evb\right]_{p}\mplus\Tz\mtimes\left[\eg \mtimes\MTDz\mplus \MTDaz \right]_{p,\cdot}\mtimes \evb \Leftrightarrow\\
\eg\mtimes\left[\evb\right]_{p} \ele \taugamma\mplus\Tz\mtimes\underbrace{\left[\eg \mtimes\MTDz\right]_{p,\cdot}}_{\zz}\mtimes \hspace{0.8mm} \evb\mplus  \Tz\mtimes\left[\MTDaz\right]_{p,\cdot} \mtimes \evb\Leftrightarrow\\
\eg \ele \taugamma\mplus \Tz\mtimes \bigoplus_{q\in\leg{m}}
\underbrace{\left[\MTDaz\right]_{p,q}}_{\ze}\mtimes\left[\evb\right]_{q} \Leftrightarrow\\
\eg\ele \taugamma\mplus \Tz\mtimes \Mpower{\Tz}{\m-1} \Leftrightarrow\\
\eg\ele \taugamma\mplus \Mpower{\Tz}{\m}
}
Combining i) and ii) we conclude that equation \refeq{zb} holds true.
\end{proof}

\examples{
Consider again the trotting gait for a quadruped $\gait_{\mathrm{trot}}$ defined in \refeq{trottinggait}. For this gait $m=2$, resulting in:
\shrinkornot{
\NE{
\ev_{\mathrm{trot}}=\matris{c}{\Tf \\ \Td\otimes\Tf^{\otimes 2}
\\\Td\otimes\Tf^{\otimes 2}\\\Tf \\ 0 \\ \Td\otimes \Tf\\\Td\otimes \Tf\\0}
;~~~~~\eg_{\mathrm{trot}}= (\Tf\otimes\Td)^{\otimes 2}\oplus \Tf\otimes\Tg
}
}
{\NE{
\ev_{\mathrm{trot}}=\matris{cccccccc}{\Tf & (\Td\otimes\Tf^{\otimes 2})
&(\Td\otimes\Tf^{\otimes 2})&\Tf & 0 & (\Td\otimes \Tf)&(\Td\otimes \Tf)&0}^T
\!;~~\eg_{\mathrm{trot}}= (\Tf\otimes\Td)^{\otimes 2}\oplus \Tf\otimes\Tg
}
}
}
\begin{lemma}\labellem{irreducibility}
Matrices $A$ and $\bar{A}$ are irreducible.
\end{lemma}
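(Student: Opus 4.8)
The plan is to reduce the whole question to the normalized matrix $\bar A$. Since $\bar A=\CC\mtimes\A\mtimes\CC^{T}$ with $\CC$ a similarity matrix, the precedence graphs $\mathcal{G}(\A)$ and $\mathcal{G}(\bar A)$ coincide up to a renaming of the vertices, as already observed in \refsec{combinatorial}; hence $\A$ is irreducible if and only if $\bar A$ is, and it suffices to prove that $\mathcal{G}(\bar A)$ is strongly connected. I would argue directly from the explicit block structure in \refeq{irreducible} (equivalently \refeq{verybigAexplicit}), bearing in mind that only the \emph{pattern} of finite entries (those $\neq\ep$), and not their numerical values, determines the arcs; since $\Tf,\Tg,\Td$ are finite reals this pattern is fixed.

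First I would read off the relevant arcs, labelling the $2n$ vertices by the components of $\bar x=[\tT{\leg{1}}^{T}\cdots\tT{\leg{m}}^{T}~\tL{\leg{1}}^{T}\cdots\tL{\leg{m}}^{T}]^{T}$. The bottom-right block $\Md=(\Tf\mtimes\MTDn)^{*}$, whose structure is displayed in \refeq{deltamatrix}, has full finite subdiagonal blocks, yielding a forward chain $\tL{\leg{1}}\!\to\!\tL{\leg{2}}\!\to\!\cdots\!\to\!\tL{\leg{m}}$ among the lift-off vertices, while its diagonal identity blocks only contribute self-arcs. From the top-right block $\Tf\mtimes\Md$ the diagonal supplies, for every leg $p$, an arc $\tL{p}\!\to\!\tT{p}$ (the swing relation $\tT{p}=\tL{p}+\Tf$). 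From the bottom-left block $\Tg\mtimes\Md\mplus\Mv$, the diagonal of $\Tg\mtimes\Md$ gives $\tT{p}\!\to\!\tL{p}$ for every $p$, whereas $\Mv$, whose finite entries fill the entire last block-column across all rows (see \refeq{vmatrix}), produces arcs from every touchdown vertex of group $\leg{m}$ to every lift-off vertex, in particular the wrap-around $\tT{\leg{m}}\!\to\!\tL{\leg{1}}$.

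Strong connectivity then follows in two steps. The lift-off vertices form a single strongly connected set: the forward chain of $\Md$ reaches $\tL{\leg{m}}$ from any $\tL{\leg{j}}$, the diagonal of $\Tf\mtimes\Md$ moves on to $\tT{\leg{m}}$, and $\Mv$ returns to $\tL{\leg{1}}$, closing a directed cycle; because the off-diagonal blocks involved are full $\Mone$-blocks linking every leg of one group to every leg of another, this cycle visits every individual lift-off vertex. Each touchdown vertex $\tT{p}$ then joins the same component, being reachable from $\tL{p}$ through the top-right diagonal and reaching $\tL{p}$ back through the bottom-left diagonal (or through $\Mv$ when $p\in\leg{m}$). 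Thus every vertex reaches every other, $\mathcal{G}(\bar A)$ is strongly connected, and by the similarity argument both $\bar A$ and $\A$ are irreducible.

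The hard part is essentially bookkeeping: correctly extracting the finite-versus-$\ep$ pattern from the large matrix of \refeq{verybigAexplicit} and verifying the two facts that make the cycle close, namely that $\Mv$ stays finite throughout the \emph{whole} last block-column (so that the wrap-around reaches group $\leg{1}$ and, through the chain, every group) and that the subdiagonal coupling blocks are full $\Mone$-blocks (so that connectivity holds leg-by-leg and not merely for group representatives). It is also worth noting that the identity blocks deliberately inserted into $\Aone$, remarked upon after \refeq{implicit}, are precisely what render the diagonal of $\bar A$ finite, consistent with the claim there that they are responsible for irreducibility.
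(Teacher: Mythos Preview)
Your argument is correct and follows the same overall strategy as the paper: reduce to $\bar A$ via the similarity $\CC$, then read off enough finite entries from the explicit block structure in \refeq{verybigAexplicitA}--\refeq{verybigAexplicit} to exhibit strong connectivity of $\mathcal{G}(\bar A)$.

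The only difference is in \emph{which} arcs are used. The paper observes that the entire block column indexed by $t_{\leg{m}}$ (blocks $A_{12},A_{22},A_{32},A_{42}$, coming from $\Mv$) is fully finite, and that the block row of $t_{\leg{m}}$ (blocks $A_{21},A_{22}$ fully finite, $A_{23}$ with finite diagonal) gives every vertex an arc into $t_{\leg{m}}$; hence $t_{\leg{m}}$ acts as a hub and strong connectivity is immediate in one step. You instead chain through the lift-off groups using the subdiagonal $\Mone$-blocks of $\Md$, close the loop via $\Mv$, and then attach each touchdown vertex through the diagonals of $\Tf\mtimes\Md$ and $\Tg\mtimes\Md$. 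Both routes are equally valid; the paper's hub argument is a bit shorter, while yours makes the underlying cycle (and the role of the inserted identity blocks in $\Aone$) more transparent.
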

\begin{proof}
The sub-matrices $A_{12},A_{22},A_{32},A_{42},A_{21}$ defined in expressions \refeq{verybigAexplicitB} and \refeq{verybigAexplicit} have all their elements different from $\zz$. The sub-matrix $A_{23}$ has all diagonal elements different from $\zz$. As such, any node can be reached by any other node via the rows defined by $A_{12},A_{22},A_{32},A_{42}$ and the columns defined by $A_{21},A_{22},A_{23}$. Therefore $\bar{A}$ is irreducible. Since $A$ is a similarity transformation away from $\bar{A}$ then we conclude that $A$ is also irreducible.
\end{proof}
\begin{corollary}
The max-plus eigenvalue $\eg$ of $A$ (and $\Abar$) given by \refeq{eigenvaluedefinition} is unique.
\end{corollary}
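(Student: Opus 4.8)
The plan is to obtain the statement as an immediate consequence of the two preceding lemmas together with the classical fact, recalled in \refsec{maxplus}, that an irreducible max-plus matrix possesses exactly one max-plus eigenvalue (see e.g.~\cite{cohen85}). First I would invoke \reflem{irreducibility}, which establishes that both $\A$ and $\Abar$ are irreducible; this alone guarantees that each of these matrices has a \emph{single} max-plus eigenvalue, say $\mu$. Next I would use the preceding existence lemma, which exhibits the value $\eg$ of \refeq{eigenvaluedefinition} as a max-plus eigenvalue of $\A$ (and of $\Abar$), with corresponding eigenvector $\ev$. Since $\mu$ is the only eigenvalue, one must have $\eg=\mu$, and hence $\eg$ is the unique max-plus eigenvalue.

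The consistency between $\A$ and $\Abar$ requires no separate argument: as already observed in \refsec{combinatorial}, the similarity transformation $\Abar=\CC\mtimes\A\mtimes\CC^{T}$ leaves the precedence graph unchanged up to a relabeling of vertices, so it preserves both irreducibility and the eigenvalue. Thus the uniqueness statement transfers verbatim from one matrix to the other, and it suffices to argue for either one.

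The main obstacle here is essentially nil, since uniqueness is a direct corollary; the only point that genuinely needs checking is that the hypotheses of the two lemmas are in force. Assumption A1 ($\Tg,\Tf>0$) is what the existence lemma relies on (it ensures $\eg>0$, so that $\ev\neq\Mz$ yields a bona fide eigenpair rather than a degenerate one), whereas irreducibility is structural and holds regardless of the holding-time values. As an independent cross-check one could instead compute the maximal average circuit weight of $\mathcal{G}(\A)$ and verify directly that it equals $\Mpower{\Tz}{\m}\mplus\taugamma$; this graph-theoretic route likewise proves uniqueness, but it forces one to enumerate the critical circuits of the highly structured matrix in \refeq{verybigAexplicit}, so the argument via irreducibility is decidedly the shorter one.
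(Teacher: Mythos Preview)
Your proposal is correct and matches the paper's approach exactly: the corollary is stated immediately after \reflem{irreducibility} and is meant to follow at once from irreducibility of $\A$ (and $\Abar$) together with the classical fact, recalled in \refsec{maxplus}, that an irreducible max-plus matrix has a single eigenvalue, combined with the existence lemma identifying that eigenvalue as $\eg$. The paper offers no separate proof beyond this, so your write-up is precisely what is intended.
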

The max-plus eigenvector $v$ defined by \refeq{eigenvectordefinitionA}--\refeq{eigenvectordefinitionB} is not necessarily unique, given assumption A1 alone. Since, to the authors' best knowledge, there exists no algebraic method to prove max-plus eigenvector uniqueness in general, we take advantage of the precedence graph of $\bar{A}$ to further investigate this property. If the critical graph of a irreducible max-plus system matrix has a single strongly connected subgraph, then its max-plus eigenvector is unique up to a max-plus scaling factor (see \cite{baccelli92}, Theorem 3.101). We proceed by computing the critical graph(s) of $\bar{A}$.

\subsection{The precedence graph of $\bar{A}$}\labelsec{graph}

Given expression \refeq{verybigAexplicit} it is possible to construct the precedence graph of $\bar{A}$. Since this graph can be quite large for a general $\bar{A}$, we find it more efficient to first group ``similar'' nodes into a single node, i.e. apply a procedure called node reduction (\reffig{GraphReductions}). Next, we show various subgraphs of the graph of $\bar{A}$ to better illustrate its structure (\reffig{GraphConstruction}). The total precedence graph of $\bar{A}$ is thus the combination of Figures \ref{fg.GraphReductions} and \ref{fg.GraphConstruction}.
\Gshrink
The process of constructing the graph of $\bar{A}$ starts by grouping all nodes of an event associated with a group of legs $\leg{i}$ into a single node. This can be accomplished since event nodes from the same group of legs $\leg{i}$ have ``similar'' incoming and outgoing arcs. As an example, consider the first set of $\#\leg{1}$ rows of $\bar{A}$ as defined in expression \refeq{verybigAexplicit}:
\shrinkornot{
\NEn{
t_{\leg{1}}(k)\ele\taugamma\otimes E_{1} \otimes t_{\leg{1}}(k-1)\oplus\nonumber\\
&&\Tdf\otimes \Mone_{1,m}\otimes t_{\leg{m}}(k-1)\oplus\labeleq{nodereductionexample}\\
&&\Tf\otimes E_{1}\otimes l_{\leg{1}}(k-1)\nonumber
}}
{
\NEn{
t_{\leg{1}}(k)\ele\taugamma\otimes E_{1} \otimes t_{\leg{1}}(k-1)\oplus\Tdf\otimes \Mone_{1,m}\otimes t_{\leg{m}}(k-1)\oplus
\Tf\otimes E_{1}\otimes l_{\leg{1}}(k-1)\labeleq{nodereductionexample} 
}}
The precedence graph for equation \refeq{nodereductionexample} consists of $3 \times \#\leg{1}$ nodes, since it involves the vectors $t_{\leg{1}}$, $t_{\leg{m}}$, and $l_{\leg{1}}$. The relation between $t_{\leg{1}}(k)$ and $t_{\leg{1}}(k-1)$ results in $\#\leg{1}$ self connected arcs in the $t_{\leg{1}}$ events with weights $\taugamma$. Instead of expressing all elements of $t_{\leg{1}}$ as individual nodes with self arcs, we reduce then to a single node with one self arc, as seen in \reffig{GraphReductions}-a2. The dashed attribute used on the self arc indicates that for each node in the group only self arcs exist, as expressed by the ``connecting'' matrix $E_{1}$. The relation between $t_{\leg{1}}(k)$ and $t_{\leg{m}}(k-1)$ is somewhat more involved, since it contains $\#\leg{1}\times \#\leg{m}$ arcs, as expressed by the connecting matrix $\Mone_{1,m}$. The resulting node reduction is illustrated in \reffig{GraphReductions}-b1. The node reduction for the relation between $t_{\leg{1}}$ and $l_{\leg{1}}$ is illustrated in  \reffig{GraphReductions}-a4. Again we use dashed attributes on the arcs to represent the connecting matrix $E_{1}$. For all other relations with connecting matrices $\Mone$ we use solid arcs. We make an exception in Figures \ref{fg.GraphReductions}-c1 to \ref{fg.GraphReductions}-c4 where different line attributes are used to distinguish arcs from $t_{\leg{p}}\to t_{\leg{q}}$, $t_{\leg{p}}\to l_{\leg{q}}$, etc. The same line attributes are used in Figures \ref{fg.GraphConstruction}-c1 and \ref{fg.GraphConstruction}-c2. Note that multiple incoming arcs to a node are related via the $\mplus$ operation, e.g. as in the example \refeq{nodereductionexample} the node $t_{\leg{1}}$ has 3 incoming arcs, illustrated in Figure \ref{fg.GraphConstruction}.
The following list summarizes the node reduction:
\begin{itemize}
\item \reffig{GraphReductions}-a1 illustrates node reduction of the term $\Tdf^{\otimes m}\otimes \Mone_{m,m}$ of sub-matrix $A_{22}$ from expressions  \refeq{verybigAexplicitB} and \refeq{verybigAexplicit}
\item \reffig{GraphReductions}-a2 illustrates the node reduction of the block diagonal of matrix $A_{11}$ and the $\taugamma\otimes \bar{E}_m$ term of $A_{22}$.
\item \reffig{GraphReductions}-a3 illustrates the node reduction of the block diagonal of matrix $[A_{33}^{T}~~A_{43}^{T}]^{T}$
\item \reffig{GraphReductions}-a4 illustrates the node reduction of the term $\Tg\otimes E_{m}$ of sub-matrix $A_{42}$ together with the block diagonals of matrices $A_{31}$ and $[A_{13}^{T}~~A_{23}^{T}]^{T}$
\item Figures \ref{fg.GraphReductions}-b1 and \ref{fg.GraphReductions}-b2 illustrate the node reduction for the columns formed by the matrices (not including the term $\Tg\otimes E_{m}$ from matrix $A_{42}$ already represented in Figure \ref{fg.GraphReductions}-a4) $A_{12}$ and $[A_{32}^{T}~~A_{42}^{T}]^{T}$ respectively
\item Figures \ref{fg.GraphReductions}-c1 to \ref{fg.GraphReductions}-c4 illustrate the node reduction of the off-diagonal elements of matrices $\taugamma\otimes W$, $\Tf\otimes W$, $\Tg\otimes W$, and $W$, from expression \refeq{verybigAexplicitA} respectively.
\end{itemize}
Given the node reduction one can now proceed to construct the precedence graph of $\bar{A}$:
\begin{itemize}
\item \reffig{GraphConstruction}-a is the graph of the block diagonal of $\bar{A}$ together with the block diagonals of the sub-matrices $\matris{cc}{A_{31}&A_{32}\\A_{41}&A_{42}}$ and $[A_{13}^{T}~~A_{23}^{T}]^T$ using the node reductions presented in Figures \ref{fg.GraphReductions}-a1 to \ref{fg.GraphReductions}-a4.
\item \reffig{GraphConstruction}-b is the graph of the columns formed by the matrices $A_{12}$ and $[A_{32}^{T}~~A_{42}^{T}]^{T}$ using node reductions presented in Figures \ref{fg.GraphReductions}-b1 and \ref{fg.GraphReductions}-b2.
\item Figures \ref{fg.GraphConstruction}-c1 and \ref{fg.GraphConstruction}-c2 illustrate two subgraphs of the remaining columns of $\bar{A}$. Note that we only present the subgraphs of the first sets of $\#\leg{1}$ and $\#\leg{2}$ out of a total of $m-1$ columns. These follow the same pattern. We use different attributes on the arcs, such as dashed, thick solid, etc., to distinguish the different node reductions, as presented in Figures \ref{fg.GraphReductions}-c1 to \ref{fg.GraphReductions}-c4.
\end{itemize}

\figuracomprida{GraphReductions}{15.2cm}{Graph reductions. Touchdown and lift off events with indexes belonging to the same set $\leg{q}$ can be grouped together since they have the same number of output and input arcs with the same weights. }{}

\figuracomprida{GraphConstruction}{13cm}{Elements of the precedence graph of the system matrix $A$. The total precedence graph of $A$ is composed of all the arcs presented in a) and b), together with the $m-1$ remaining subgraphs that follow the pattern of Figures c1) and c2).}{}

Consider the following assumption:\\[-2mm]
\begin{assumption}[A2] 
$\taugamma\leq  \Mpower{\Tz}{\m}$\\[-1mm]
\end{assumption}
\begin{lemma}\labellem{sscs}
If assumption A2 is verified then the critical graph of $\graphc(A)$ (and $\graphc(\Abar)$) has a single strongly connected subgraph.
\end{lemma}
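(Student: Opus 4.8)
The plan is to certify the critical circuits directly with the max-plus eigenpair of \refeq{eigenvaluedefinition}--\refeq{eigenvectordefinitionB}, rather than enumerating circuits of the large graph \refeq{verybigAexplicit} by hand. Under A2 the eigenvalue reduces to $\eg=\Mpower{\Tz}{m}$ because $\taugamma\le\Mpower{\Tz}{m}$, and by \reflem{irreducibility} the matrix is irreducible, so $\eg$ is its unique eigenvalue and equals the maximal average circuit weight, with $\bar{A}\otimes\ev=\eg\otimes\ev$. Reading this identity row by row gives, for every arc $q\to p$,
\[
[\bar{A}]_{pq}\otimes[\ev]_q \;\le\; \eg\otimes[\ev]_p ,
\]
with equality attained by at least one $q$ in each row $p$; I call such an arc \emph{tight}. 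Telescoping the finite quantities $[\ev]$ around any circuit shows that the circuit has average weight $\eg$ if and only if all of its arcs are tight. Thus the critical graph is exactly the union of circuits built from tight arcs, and it suffices to locate the tight arcs and describe their strongly connected structure.

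First I would classify the tight arcs blockwise in \refeq{irreducible}, substituting $[\ev]_q=\Tf\otimes\Mpower{\Tz}{j-1}$ and $[\ev]_{q+n}=\Mpower{\Tz}{j-1}$ for $q\in\leg{j}$. The decisive observation is that every arc leaving a lift-off node, i.e.\ every entry of the blocks $\Tf\otimes\Md$ and $\Md$, is slack by exactly $\Mpower{\Tz}{m}>\ze$, since both eigenvector balances differ by this positive amount; consequently no lift-off node can lie on a critical circuit. Conversely, the wrap-around arcs contributed by the last group, namely the single nonzero column of $\Mv=\Md\otimes\MTDan$ appearing in the two touchdown-column blocks, are tight for all admissible parameters, as is the complete clique on the touchdown nodes of $\leg{m}$, whose arcs all carry weight $\Mpower{\Tz}{m}=\eg$. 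The remaining touchdown-to-touchdown and touchdown-to-liftoff arcs (the $\taugamma\otimes\Md$ and $\Tg\otimes\Md$ parts) turn out to be tight precisely when A2 holds with equality, $\taugamma=\Mpower{\Tz}{m}$.

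The conclusion then follows in two cases. Because lift-off nodes carry no critical circuit, every critical circuit uses touchdown nodes only. When A2 is strict, the only tight touchdown arcs are those of the $\leg{m}$-clique, which is strongly connected, so $\graphc(\bar{A})$ is that single clique. When A2 holds with equality the forward arcs $t_{\leg{j}}\to t_{\leg{i}}$ with $i>j$ and the touchdown self-loops become tight as well, but strong connectivity is preserved: from any touchdown group one reaches $\leg{m}$ along forward arcs and returns to any group through the always-tight wrap-around arcs out of $\leg{m}$. Hence in both cases $\graphc(\bar{A})$ is a single strongly connected subgraph; since $A$ and $\bar{A}$ differ only by the relabelling $\CC$, their precedence graphs are isomorphic and the same holds for $\graphc(A)$.

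The main obstacle I anticipate is the careful blockwise tightness bookkeeping for the large matrix \refeq{verybigAexplicit}, and in particular the boundary case $\taugamma=\Mpower{\Tz}{m}$, where many additional arcs become tight and one must verify that the newly critical forward and self-loop arcs do not spawn a second component but are glued to the $\leg{m}$-clique by the wrap-around arcs. The structural fact that keeps the argument clean is the uniform slack $\Mpower{\Tz}{m}$ on all arcs leaving lift-off nodes, which confines every critical circuit to the touchdown nodes regardless of the parameter values.
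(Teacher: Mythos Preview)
Your approach is correct and takes a genuinely different route from the paper. The paper proceeds by direct circuit enumeration in the (node-reduced) precedence graph of $\bar A$: for each of the two cases $\taugamma=\Mpower{\Tz}{m}$ and $\taugamma<\Mpower{\Tz}{m}$ it exhibits explicit circuit templates $c_1,c_2,c_3,c_4$, computes their average weights, and argues that any circuit visiting a lift-off node loses weight by comparing the four block weights $\taugamma\otimes\Tz^{\otimes(q-p)}$, $\Tg\otimes\Tz^{\otimes(q-p)}$, $\Tf\otimes\Tz^{\otimes(q-p)}$, $\Tz^{\otimes(q-p)}$. Your potential-function argument via the eigenvector is cleaner and more systematic: the single observation that every arc out of a lift-off node carries slack exactly $\Mpower{\Tz}{m}$ instantly confines all critical circuits to touchdown nodes, after which the tight-arc bookkeeping on the touchdown block is short. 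What the paper's route buys is independence from the eigenvector formula (it effectively re-derives $\eg$ as the maximal average circuit weight); your route leans on Lemmas~\ref{th.existanceuniqueness} and~\ref{lem.irreducibility}, which is legitimate here since they precede the lemma. One small wording fix: in the strict case you say ``the only tight touchdown arcs are those of the $\leg{m}$-clique'', but the wrap-around arcs $t_{\leg{m}}\to t_{\leg{i}}$ with $i<m$ coming from $\Tf\otimes\Mv$ are also tight---they simply cannot be closed into a circuit because no tight arc returns from $t_{\leg{i}}$ when $\taugamma<\Mpower{\Tz}{m}$, so the critical graph is still just the $\leg{m}$-clique and your conclusion stands.
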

\begin{proof}
We consider two cases:\\\\
i) $\taugamma =  \Mpower{\Tz}{\m}=\eg$.\\In this situation the circuits presented in Figures \ref{fg.GraphReductions}-a1 and \ref{fg.GraphReductions}-a2 all belong to the critical graph since their weights are $\taugamma$ or $\Mpower{\Tz}{\m}$ both equal to the max-plus eigenvalue $\eg$. Note that any circuit $c_{1}$ of length $l$ made from the nodes of $t_{\leg{m}}$, illustrated in \reffig{GraphReductions}-a1, has an average weight of
\NEn{\frac{|c_{1}|_{w}}{|c_{1}|_{1}}=\frac{\left(\Tdf^{\otimes m}\right)^{\otimes l}}{l}=\Tdf^{\otimes m}=\eg}
and as such also belongs to the critical graph.
\Gshrink
Any other circuit in the precedence graph of $\bar{A}$ must pass through at least one node of $t_{\leg{m}}$, as illustrated in Figures \ref{fg.GraphConstruction}-b,  \ref{fg.GraphConstruction}-c1, and  \ref{fg.GraphConstruction}-c2 (with the exception of the self-loops in \reffig{GraphReductions}-a3 and the circuits in \reffig{GraphReductions}-a4 that we don't consider since their weights are $\ze$ and $\taugamma/2$ both less then $\eg$). Additionally, arcs starting in nodes from a group $t_{\leg{q}}$ with $q<m$ are only connected to nodes in $t_{\leg{q+p}}$ for $p\geq 0$ (or $l_{\leg{q+p}}$). This is again illustrated in Figures \ref{fg.GraphConstruction}-a, \ref{fg.GraphConstruction}-c1, and  \ref{fg.GraphConstruction}-c2. Let $t_{[\leg{q}]_{i}}$ denote element $i$ of $t_{\leg{q}}$. Consider the circuit
\NEn{c_{2}:t_{[\leg{m}]_{i}} \to t_{[\leg{q}]_{j}}\to t_{[\leg{m}]_{i}}}
with $q<m$. The average weight is (with $\taugamma =  \Mpower{\Tz}{\m}$)
\NEn{\frac{|c_{2}|_{w}}{|c_{2}|_{1}}=\frac{\Tdf^{\otimes q}\otimes \taugamma\otimes \Tdf^{\otimes (m-q)}}{2}=\frac{\Tdf^{\otimes m}\otimes \taugamma}{2}=\eg}
Circuit $c_{2}$ is thus also in the critical graph. For the general circuit of the type
\NEn{c_{3}:t_{[\leg{m}]_{i}} \to \underbrace{ t_{[\leg{q_{1}}]_{j_{1}}}\to t_{[\leg{q_{2}}]_{j_{2}}}\to \cdots \to t_{[\leg{q_{l}}]_{j_{l}}}}_{l \text{~nodes}}\to t_{[\leg{m}]_{i}}}
with $q_{1}<q_{2}<\cdots<q_{l}<m$, the average weight is
\shrinkornot{
\NE{\frac{|c_{3}|_{w}}{|c_{3}|_{1}}\ele\frac{\taugamma^{\otimes l}\otimes \Tdf^{\otimes q_{1}}\otimes  \Tdf^{\otimes (q_{2}-q_{1})}\otimes \cdots \otimes  \Tdf^{\otimes (m-q_{l})}}{l+1}\\
\ele \frac{\taugamma^{\otimes l}\otimes\Tdf^{\otimes m}}{l+1}=\eg.}
}{
\NEn{\frac{|c_{3}|_{w}}{|c_{3}|_{1}}\ele\frac{\taugamma^{\otimes l}\otimes \Tdf^{\otimes q_{1}}\otimes  \Tdf^{\otimes (q_{2}-q_{1})}\otimes \cdots \otimes  \Tdf^{\otimes (m-q_{l})}}{l+1}= \frac{\taugamma^{\otimes l}\otimes\Tdf^{\otimes m}}{l+1}=\eg.}
}
Again, circuit $c_{3}$ is part of the critical graph. Any circuit that passes through any node in $l_{\leg{q}}$, for any $q$, will never be in the critical graph. This is due to the fact that arcs within touchdown nodes of different leg groups yield a higher weight:
\NEn{
t_{[\leg{q}]_{i}}\to t_{[\leg{p}]_{j}}&~~&\text{weight:}~\taugamma\otimes\Td^{\otimes (q-p)}\\
t_{[\leg{q}]_{i}}\to l_{[\leg{p}]_{j}}&~~&\text{weight:}~\Tg\otimes\Td^{\otimes (q-p)}\\
l_{[\leg{q}]_{i}}\to t_{[\leg{p}]_{j}}&~~&\text{weight:}~\Tf\otimes\Td^{\otimes (q-p)}\\
l_{[\leg{q}]_{i}}\to l_{[\leg{p}]_{j}}&~~&\text{weight:}~\Td^{\otimes (q-p)}
}
As such, a path that connects a touchdown node to a lift off node ``loses'' $\taugamma-\Tg=\Tf$ from the maximum possible weight, a path from lift off to lift off nodes loses $\taugamma$, and a path from lift off nodes to touchdown nodes loses $\Tg$ in weight.
This can also be observed in the structure of $\bar{A}$, in equation \refeq{verybigAexplicitA}, where the sub-matrix $\Tf\otimes(\Tg\otimes W\oplus V)$ overcomes the sub-matrices $\Tg\otimes W\oplus V$, $\Tf\otimes W$, and $W$.
Consider, for example, the circuit $c_{4}$:
\NEn{c_{4}:t_{[\leg{m}]_{i}} \to t_{[\leg{p}]_{j_{0}}}\to l_{[\leg{p+q}]_{j_{q}}} \to t_{[\leg{m}]_{i}}}
then
\shrinkornot{
\NE{\frac{|c_{4}|_{w}}{|c_{4}|_{1}}\ele\frac{\Tdf^{\otimes p}\otimes( \Tg\otimes\Tdf^{\otimes q})\otimes( \Tf\otimes \Tdf^{\otimes (m-(p+q))})}{3}\\
\ele\frac{\taugamma\otimes\Tdf^{\otimes m}}{3}<\eg.
}}
{
\NEn{\frac{|c_{4}|_{w}}{|c_{4}|_{1}}\ele\frac{\Tdf^{\otimes p}\otimes( \Tg\otimes\Tdf^{\otimes q})\otimes( \Tf\otimes \Tdf^{\otimes (m-(p+q))})}{3}=\frac{\taugamma\otimes\Tdf^{\otimes m}}{3}<\eg.
}}
Since all the nodes in the critical graph are connected (they are all touchdown nodes) we conclude that for the case $\taugamma=\Mpower{\Tz}{\m}=\eg$ the critical graph of $\bar{A}$ has a single strongly connected subgraph. \reffig{CriticalGraphs}-a illustrates the complete critical graph of $\bar{A}$ for this case.\\\\
ii) $\taugamma < \Mpower{\Tz}{\m}=\eg$.\\
In this situation only circuits of the type $c_{1}$ are part of the critical graph. Circuits of the type $c_{2}$ or $c_{3}$ are not part of the critical graph. \reffig{CriticalGraphs}-b illustrates the resulting critical graph of $\bar{A}$. Since all the nodes of $t_{\leg{m}}$ are connected to each other we conclude that for the case $\taugamma<\Mpower{\Tz}{\m}=\eg$ the critical graph of $\bar{A}$ has a single strongly connected subgraph. 

A third case can be considered: $\Mpower{\Tz}{\m}< \taugamma=\eg$. In this situation the critical graph of $\bar{A}$ does not have a single strongly connected subgraph. \reffig{CriticalGraphs}-c illustrates this situation, that we document here without proof.
\end{proof}

\figuracomprida{CriticalGraphs}{14.5cm}{Critical graphs of the system matrix $\bar{A}$. a) Case 1: $\taugamma = \Mpower{\Tz}{\m}=\eg$. b) Case 2: $\taugamma < \Mpower{\Tz}{\m}=\eg$. c) Case 3: $\Mpower{\Tz}{\m}< \taugamma=\eg$.
}{}

\begin{theorem}
Given assumptions A1 and A2, the max-plus eigenvalue $\eg$ of the system matrix $A$ (and $\Abar$), defined by equation \refeq{eigenvaluedefinition}, is unique, and the max-plus eigenvector $\ev$ of $A$ (and $\Abar$), defined by equations \refeq{eigenvectordefinitionA}--\refeq{eigenvectordefinitionB} is unique up to a max-plus scaling factor.
\end{theorem}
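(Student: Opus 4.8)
The plan is to assemble the theorem from the three lemmas already established, since together they provide exactly the two ingredients the uniqueness result requires: irreducibility of the system matrix and a structural condition on its critical graph. The statement splits cleanly into an eigenvalue claim and an eigenvector claim, and I would dispatch them in that order, treating the eigenvalue part first as it is the easier one.

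For the eigenvalue, I would first invoke assumption A1 together with the existence lemma (which produces $\eg \bydefinition \Mpower{\Tz}{\m} \oplus \taugamma$ and the explicit $\ev$ of \refeq{eigenvectordefinitionA}--\refeq{eigenvectordefinitionB}) to guarantee that $\eg$ is at least \emph{a} max-plus eigenvalue of $A$ and $\Abar$. I would then appeal to \reflem{irreducibility}, which asserts that $A$ and $\Abar$ are irreducible, and to the standard fact recalled in \refsec{maxplus} that an irreducible matrix possesses exactly one max-plus eigenvalue (equal to the maximal average circuit weight of its precedence graph). Combining these, $\eg$ is not merely an eigenvalue but the unique one. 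This half is in fact already recorded as the corollary following \reflem{irreducibility}, so it demands no fresh computation.

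For the eigenvector, the strategy is to apply the criterion cited from \cite{baccelli92}, Theorem 3.101: for an irreducible matrix whose critical graph $\graphc$ consists of a single strongly connected subgraph, the max-plus eigenvector is unique up to a max-plus scaling factor. I would verify its two hypotheses. Irreducibility is again \reflem{irreducibility}, and it transfers from $\Abar$ to $A$ since the two precedence graphs are isomorphic up to a relabelling induced by $\CC$. The single-strongly-connected-subgraph condition is precisely the conclusion of \reflem{sscs}, which holds under assumption A2. Since A1 already supplies the explicit eigenvector $\ev$ and A2 forces $\graphc(\Abar)$ (hence $\graphc(A)$) to have one strongly connected component, Theorem 3.101 then yields that every max-plus eigenvector is a max-plus scalar multiple of $\ev$.

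The main obstacle is not this final assembly, which is essentially bookkeeping, but rather the correct isolation of assumption A2 as the sharp condition, and the real content lives in the case analysis of \reflem{sscs}. Assumption A2, namely $\taugamma \leq \Mpower{\Tz}{\m}$, covers exactly Cases 1 and 2 of that proof, where all critical circuits collapse onto the mutually connected touchdown nodes of the last leg group, whereas the complementary Case 3 with $\Mpower{\Tz}{\m} < \taugamma = \eg$ produces several disjoint critical circuits and genuinely destroys eigenvector uniqueness. The one point I would therefore state explicitly is that A2 cannot be dropped, and that the ``up to a scaling factor'' caveat is unavoidable because $\alpha \otimes \ev$ is an eigenvector whenever $\ev$ is, for any finite $\alpha$.
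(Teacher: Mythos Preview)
Your proposal is correct and follows essentially the same route as the paper's own proof: invoke \reflem{irreducibility} for uniqueness of the eigenvalue, then combine \reflem{irreducibility} and \reflem{sscs} with Theorem~3.101 of \cite{baccelli92} for uniqueness of the eigenvector up to scaling. The additional remarks about the necessity of A2 and the unavoidability of the scaling caveat are accurate but go beyond what the paper records.
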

\begin{proof}
According to \reflem{irreducibility} $A$ is irreducible, and as such it has a unique max-plus eigenvalue. According to \reflem{sscs} the critical graph of $\graphc(A)$ has a single strongly connected subgraph, and as such its max-plus eigenvector is unique up to a max-plus scaling factor (see \cite{baccelli92}, Theorem 3.101).
\end{proof}

%
%

\subsection{Coupling time}\labelsec{coupling}

\refth{coupling} on page \pageref{th.coupling} describes an important property of max-plus-linear systems when the system matrix $A$ is irreducible: it guarantees the existence of an autonomous steady-state regime that is achieved in a number of finite steps $\couplingz$, called the \emph{coupling time}. Computing the coupling time is very important for the application of legged locomotion since it provides the number of steps a robot needs to take to reach steady state after a gait transition or a perturbation.
\begin{lemma}\labellem{couplingproof}
Given assumptions A1, A2, the coupling time for the max-plus-linear system defined by equation \refeq{sssync} is $\couplingz=2$ with cyclicity $c=1$. 
\end{lemma}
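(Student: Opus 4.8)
The plan is to reduce everything to the normalized matrix $\Abar$ of equation \refeq{irreducible} and then analyze only its second and third max-plus powers. By \refth{coupling} the integers $c$, the eigenvalue $\eg$, and the coupling time $\couplingz$ exist, and $\eg$ is already pinned down by \reflem{irreducibility} and \refeq{eigenvaluedefinition}. Because $\Abar=\CC\mtimes\A\mtimes\CC^{T}$ with $\CC\mtimes\CC^{T}=\Mid$ a permutation-type similarity, one has $\Mpower{\Abar}{p}=\CC\mtimes\Mpower{\A}{p}\mtimes\CC^{T}$, so $\Mpower{\A}{p+c}=\Mpower{\eg}{c}\mtimes\Mpower{\A}{p}$ holds iff the same relation holds for $\Abar$; hence both $c$ and $\couplingz$ are similarity invariant and I work with $\Abar$. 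The cyclicity is settled first: it equals the cyclicity of the critical graph, i.e. the gcd of the lengths of all critical circuits. From the proof of \reflem{sscs} (and under assumption A2, in the form $\taugamma\leq\Mpower{\Tz}{\m}=\eg$) the diagonal of the block $A_{22}$ in \refeq{verybigAexplicit} equals $\taugamma\mplus\Mpower{\Tz}{\m}=\Mpower{\Tz}{\m}=\eg$, so each node of $t_{\leg{m}}$ carries a self-loop of weight $\eg$. A self-loop is a critical circuit of length $1$, so the gcd is $1$ and therefore $c=1$. It remains to show $\couplingz=2$, i.e. $\Mpower{\Abar}{p+1}=\eg\mtimes\Mpower{\Abar}{p}$ for all $p\geq 2$ but not for $p=1$.

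The structural observation I would exploit is that the top block row of $\Abar$ is exactly $\Tf$ times its bottom block row, which is just the swing relation $\tT{}(k)=\Tf\mtimes\tL{}(k)$ of equation \refeq{xxa}. Writing
\NEn{
\Phi=\matris{c}{\Tf\mtimes\Mid\\\Mid},\qquad
\Psi=\matris{cc}{\Tg\mtimes\Md\mplus\Mv & \Md},
}
one has $\Abar=\Phi\mtimes\Psi$, and hence $\Mpower{\Abar}{p}=\Phi\mtimes\Mpower{N}{p-1}\mtimes\Psi$ with the $n\times n$ matrix $N=\Psi\mtimes\Phi=\taugamma\mtimes\Md\mplus\Tf\mtimes\Mv$ (using $\taugamma\mtimes\Md\geq\Md$). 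This collapses every power of the $2n\times2n$ matrix $\Abar$ into a power of the smaller $N$, and reduces the target identity $\Mpower{\Abar}{3}=\eg\mtimes\Mpower{\Abar}{2}$ to $\Phi\mtimes\Mpower{N}{2}\mtimes\Psi=\eg\mtimes\Phi\mtimes N\mtimes\Psi$. To expand $\Mpower{N}{2}$ I would use the idempotency/absorption identities already recorded in \refeq{minusa}--\refeq{minus}, namely $\Md\mtimes\Md=\Md$, $\Md\mtimes\Mv=\Mv$, and $\Mv\mtimes\Mv=\Tdf^{\mtimes(\m-1)}\mtimes\Td\mtimes\Mv$, together with the one remaining (easily computed) product $\Mv\mtimes\Md$.

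I expect the main obstacle to be a genuine subtlety rather than mere bookkeeping: it is \emph{not} the case that $\Mpower{N}{2}=\eg\mtimes N$. Composing two copies of $N$ routes paths through the last column block $\leg{m}$ and creates spurious off-diagonal ``all-ones'' entries of weight $\taugamma\mtimes\Mpower{\Tz}{\m}$ inside every diagonal block $\leg{j}$ with $j<\m$, which are absent from $\eg\mtimes N$. The same terms already show up directly in $\Mpower{\Abar}{2}$: for two distinct legs $a\neq b$ in $\leg{1}$, the only nonzero contribution to the $(t_{[\leg{1}]_a},t_{[\leg{1}]_b})$ entry of $\Mpower{\Abar}{2}$ passes through $t_{\leg{m}}$ (with weights $\Tz$ and $\taugamma\mtimes\Mpower{\Tz}{\m-1}$), giving $\taugamma\mtimes\Mpower{\Tz}{\m}$, whereas the corresponding entry of $\eg\mtimes\Abar$ is $\ep$ (the off-diagonal of the block $\taugamma\mtimes\Mid$). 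This single entry already establishes $\Mpower{\Abar}{2}\neq\eg\mtimes\Abar$, hence the lower bound $\couplingz\geq 2$.

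The heart of the argument is then to show that these extra terms are nevertheless \emph{consistent} under one further multiplication, i.e. that passing from $\Mpower{\Abar}{2}$ to $\Mpower{\Abar}{3}$ reproduces each such wrap-around entry scaled by exactly $\eg$, because any wrap-around path can only be prolonged through the $\eg$-weighted self-loops at $t_{\leg{m}}$ (so the $(t_{[\leg{1}]_a},t_{[\leg{1}]_b})$ entry becomes $\taugamma\mtimes\Mpower{\Tz}{2\m}=\eg\mtimes(\taugamma\mtimes\Mpower{\Tz}{\m})$). Concretely I would verify $\Mpower{\Abar}{3}=\eg\mtimes\Mpower{\Abar}{2}$ entry-type by entry-type, over the four arc types $t\!\to\!t$, $t\!\to\!l$, $l\!\to\!t$, $l\!\to\!l$, using assumption A2 at each step to discard the dominated ($\taugamma$-limited) contributions against the dominant $\Mpower{\Tz}{\m}$-limited ones, exactly as the weight comparison in the proof of \reflem{sscs}. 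Once $\Mpower{\Abar}{3}=\eg\mtimes\Mpower{\Abar}{2}$ is in hand, the trivial induction $\Mpower{\Abar}{p+2}=\Abar\mtimes\Mpower{\Abar}{p+1}=\eg\mtimes\Mpower{\Abar}{p+1}$ extends it to all $p\geq 2$, giving $\couplingz\leq 2$; combined with $c=1$ and the lower bound this yields $\couplingz=2$ with cyclicity $c=1$.
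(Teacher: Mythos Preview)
Your approach is sound and reaches the same conclusion, but it is organized differently from the paper. The paper works directly with the $2n\times 2n$ matrix $\Abar$: it writes a closed-form expression for $\Mpower{\Abar}{\coupling}$ valid for all $\coupling\geq 2$ (equations \refeq{ageneralA}--\refeq{ageneralB}), then proves the key inequality $\Tf\mtimes\Mv\mtimes\Md\geq\taugamma\mtimes\Md$ (a consequence of A2, verified blockwise via \refeq{vw}) to simplify and read off $\Mpower{\Abar}{\coupling+1}=\eg\mtimes\Mpower{\Abar}{\coupling}$; cyclicity $c=1$ is then immediate from that identity rather than from the critical graph. Your rank-one style factorization $\Abar=\Phi\mtimes\Psi$, collapsing the problem to the $n\times n$ matrix $N=\taugamma\mtimes\Md\mplus\Tf\mtimes\Mv$, is a genuinely cleaner device: once you note $N\mtimes\Mv=\eg\mtimes\Mv$ and (using the \emph{same} inequality the paper proves) $N\mtimes\Md=\Tf\mtimes\Mv\mtimes\Md$, the relation $\Mpower{N}{2}\mtimes\Psi=\eg\mtimes N\mtimes\Psi$ drops out in two lines, so your planned ``entry-type by entry-type'' check is in fact unnecessary given your own reduction. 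Both routes rest on exactly the identities \refeq{minusa}--\refeq{minus} and the same A2-driven inequality; yours is more conceptual, the paper's more computational.

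One minor caveat: your lower-bound witness for $\couplingz\geq 2$ tacitly assumes $\#\leg{1}\geq 2$. A choice that works whenever $m\geq 2$ is any entry in the strict upper block triangle of the $(l,l)$ quadrant, which is $\ep$ in $\eg\mtimes\Abar$ (since that quadrant is $\Md$) but finite in $\Mpower{\Abar}{2}$ (where it is $\Tf\mtimes\Mv\mtimes\Md$). The paper, incidentally, argues only the upper bound $\couplingz\leq 2$, so on this point you are being more careful.
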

\begin{proof}
Computing successive products of $\An$ and taking advantage of its structure and equations \refeq{minusa}-\refeq{minus} one can write its \mbox{$\coupling$-th} power $\An^{\mtimes \coupling}$, valid for all $\coupling \geq 2$, illustrated by equations \refeq{ageneralA} and \refeq{ageneralB}. By inspection of the expression of $\An^{\mtimes \coupling}$ in \refeq{ageneralA}--\refeq{ageneralB} one can observe that most terms are max-plus multiplying by a power of the max-plus eigenvalue $\eg$ (recall that with assumption A2 we have $\eg= \Mpower{\Tz}{\m}$). To factor out $\eg$ of the matrix composed by expressions \refeq{ageneralA} and\refeq{ageneralB} we show that
\NEn{\nonumber
&&\eg^{\mtimes(\coupling-2)}\mtimes\Tf\mtimes\Tg\mtimes\Mv\mtimes\Md
\geq\Tf^{\mtimes(\coupling-1)}\mtimes\Tg^{\mtimes \coupling}\mtimes\Md
\Leftrightarrow\\
&&\Tg\mtimes\eg^{\mtimes(\coupling-2)}\mtimes\Tf\mtimes\Mv\mtimes\Md
\geq
\Tg\mtimes\taugamma^{\mtimes(\coupling-1)}\mtimes\Md
}
Since $\eg\geq \taugamma$ it is sufficient to show that
\NEn{
&&\Tf\mtimes\Mv\mtimes\Md
\geq
\taugamma\mtimes\Md
}
 This can be confirmed by inspecting equations \refeq{deltamatrix} and \refeq{vw}:
\begin{enumerate}
\item All the terms in the upper block triangle of $\taugamma\mtimes\Md$ are $\zz$ while for $\Tf\mtimes\Mv\mtimes\Md$ they are positive numbers.
\item In the block diagonal $[\Tf\mtimes\Mv\mtimes\Md]_{i,i}=\Tdf^{\otimes m}\otimes\Mone \geq \taugamma \otimes E=[\taugamma\mtimes\Md]_{i,i}$, by assumption A2. 
\item In the lower block triangle $[\Tf\mtimes\Mv\mtimes\Md]_{i,j}=\Tdf^{\otimes (m+i-j)}\otimes\Mone \geq \Tdf^{\otimes (i-j)}\otimes\taugamma \otimes \Mone=[\taugamma\mtimes\Md]_{i,j}$, by assumption A2. 
\end{enumerate}
 Taking advantage of this simplification one can obtain equations \refeq{ageneralAsimple}, \refeq{ageneralBsimple}, and  \refeq{aa}--\refeq{ab}. Together with the similarity transformation we obtain the result valid for $\coupling \geq 2$:
\shrinkornot{ 
\NE{ \A^{\mtimes (\coupling+1)}\ele \CC\mtimes\An^{\mtimes (\coupling+1)}\mtimes\CC^{T}\\
\ele\CC\mtimes\eg\mtimes \An^{\mtimes \coupling}\mtimes\CC^{T}=\eg\mtimes \A^{\mtimes \coupling},}}
{ 
\NEn{ \A^{\mtimes (\coupling+1)}\ele \CC\mtimes\An^{\mtimes (\coupling+1)}\mtimes\CC^{T}=\CC\mtimes\eg\mtimes \An^{\mtimes \coupling}\mtimes\CC^{T}=\eg\mtimes \A^{\mtimes \coupling},}
}
thus concluding that the coupling time is $\couplingz=2$ with cyclicity $c=1$.\end{proof}
\begin{figure*}[!t]
\shrinkornot{\normalsize}{\footnotesize}
\newcommand{\tttt}{\mtimes}
\newcommand{\pppp}{\mplus}
\NEn{\left[\An^{\mtimes \coupling}\right]_{\cdot,1}\ele\matris{cc}
{\Tf\tttt\left(
\eg^{\mtimes(\coupling-2)}\tttt\Tf\tttt\Tg\tttt\Mv\tttt\Md
\pppp
\eg^{\mtimes(\coupling-1)}\tttt\Mv
\pppp
\Tf^{\mtimes(\coupling-1)}\tttt\Tg^{\mtimes \coupling}\tttt\Md\right)\\
\eg^{\mtimes(\coupling-2)}\tttt\Tf\tttt\Tg\tttt\Mv\tttt\Md
\pppp
\eg^{\mtimes(\coupling-1)}\tttt\Mv
\pppp
\Tf^{\mtimes(\coupling-1)}\tttt\Tg^{\mtimes \coupling}\tttt\Md
}\labeleq{ageneralA}
\\
\ele \matris{cc}
{\Tf\tttt\left(
\eg^{\mtimes(\coupling-2)}\tttt\Tf\tttt\Tg\tttt\Mv\tttt\Md
\pppp
\eg^{\mtimes(\coupling-1)}\tttt\Mv
\right)\\
\eg^{\mtimes(\coupling-2)}\tttt\Tf\tttt\Tg\tttt\Mv\tttt\Md
\pppp
\eg^{\mtimes(\coupling-1)}\tttt\Mv
}\labeleq{ageneralAsimple}
}

\NEn{\left[\An^{\mtimes \coupling}\right]_{\cdot,2}\ele\matris{cc}
{\Tf\tttt\left(
\eg^{\mtimes(\coupling-2)}\tttt\Tf\tttt\Mv\tttt\Md
\pppp
(\Tf\tttt\Tg)^{\mtimes(\coupling-1)}\tttt\Md\right)\\
\eg^{\mtimes(\coupling-2)}\tttt\Tf\tttt\Mv\tttt\Md
\pppp
(\Tf\tttt\Tg)^{\mtimes(\coupling-1)}\tttt\Md
}\labeleq{ageneralB}
\\
\ele\matris{cc}
{\Tf\tttt\left(
\eg^{\mtimes(\coupling-2)}\tttt\Tf\tttt\Mv\tttt\Md
\right)\\
\eg^{\mtimes(\coupling-2)}\tttt\Tf\tttt\Mv\tttt\Md
}\labeleq{ageneralBsimple}
}

\NEn{\Tf\mtimes\Mv\mtimes\Md=\matris{ccc}
{
\Tdf^{\mtimes m}\mtimes\Monexb{1,1}&\cdots&\Tdf\mtimes\Monexb{1,m}\\
\vdots&\ddots&\vdots\\
\Tdf^{\mtimes (2m-1)}\mtimes\Monexb{m,1}&\cdots&\Tdf^{\mtimes m}\mtimes\Monexb{m,m}
}\labeleq{vw}
}

\newcommand{\ttttt}{\mtimes}
\newcommand{\ppppp}{\mplus}

\NEn{\labeleq{aa}\An^{\mtimes (\coupling +1)}\ele\matris{cc}
{\Tf\ttttt\left(
\eg^{\mtimes(\coupling-1)}\ttttt\Tf\ttttt\Tg\ttttt\Mv\ttttt\Md
\ppppp
\eg^{\mtimes \coupling}\ttttt\Mv\right)
&~~~~~~\Tf\ttttt\left(
\eg^{\mtimes(\coupling-1)}\ttttt\Tf\ttttt\Mv\ttttt\Md
\right)\\
\eg^{\mtimes(\coupling-1)}\ttttt\Tf\ttttt\Tg\ttttt\Mv\ttttt\Md
\ppppp
\eg^{\mtimes \coupling}\ttttt\Mv
&~~~~~~
\eg^{\mtimes(\coupling-1)}\ttttt\Tf\ttttt\Mv\ttttt\Md
}
\\
\ele\eg\mtimes\matris{cc}
{\Tf\ttttt\left(
\eg^{\mtimes(\coupling-2)}\ttttt\Tf\ttttt\Tg\ttttt\Mv\ttttt\Md 
\ppppp
\eg^{\mtimes(\coupling-1)}\ttttt\Mv\right)
&~~~~~~\Tf\ttttt\left(
\eg^{\mtimes(\coupling-2)}\ttttt\Tf\ttttt\Mv\ttttt\Md
\right)\\
\eg^{\mtimes(\coupling-2)}\ttttt\Tf\ttttt\Tg\ttttt\Mv\ttttt\Md
\ppppp
\eg^{\mtimes(\coupling-1)}\ttttt\Mv
&~~~~~~
\eg^{\mtimes(\coupling-2)}\ttttt\Tf\ttttt\Mv\ttttt\Md
}\\
\ele\eg\mtimes \An^{\mtimes \coupling}\labeleq{ab}
}

\hrulefill
\vspace*{4pt}

\end{figure*}

%
%

\section{Conclusions}

We have shown that max-plus linear models are very well suited to model leg synchronization in legged locomotion. The abstraction of the continuous-time dynamics associated with each leg as a two-event circuit simplifies the construction of a gait space. The combinatorial nature of this gait space, arising from all the possible arrangements in which multiple legs can be synchronized, is captured by a compact representation as an ordered set of ordered sets. 

For the classes of switching max-plus linear systems developed in this paper important structural properties of the max-plus system matrix are obtained in closed-form. The unique max-plus eigenvalue represents the total cycle time, the unique (up to a scaling factor) max-plus eigenvector dictates a unique steady-state behavior, and the coupling time reveals the transient response to gait switching or disturbances. 

Although some of the derivations in this paper are quite lengthy, using a combination of discrete-event algebraic tools with graph-theoretic concepts has lead to an important result in robotics: since the coupling time was found to be two, we have shown that legged robots can switch gaits (or rhythms) or recover from large disturbances in at least two steps. In the derivation process we have found that similarity transformations can facilitate the algebraic manipulations by exposing the structure of the system matrices. This was important to find closed-form expressions to the eigen-structure of the system and the coupling time, that typically need to be computed via simulations of using numerical procedures. On the graph side, the node-reduction procedure has allowed depicting graphs that can have an arbitrary large number of nodes. A graph-theoretic proof is needed for proving the uniqueness of the max-plus eigenvector. Our results are valid for robots with an arbitrary number of legs.

Further research will look towards relaxing the structure of the system matrix to address the synchronization of general cyclic systems and towards the modeling of more general gaits.

\bibliographystyle{plain} 

\ifthenelse{\isundefined{\shrinkpaper}}
{\bibliography{string-full,references}}
{\bibliography{string-abb,references}}

\end{document}